\DeclareMathOperator{\diag}{diag}
\newcommand{\norm}[1]{\left\lVert#1\right\rVert}
\newcommand{\ceil}[1]{\left\lceil #1 \right\rceil}
\theoremstyle{plain}
\newtheorem{theorem}{Theorem}
\newtheorem{proposition}{\textbf{Proposition}}
\newtheorem{lemma}{\textbf{Lemma}}
\theoremstyle{definition}
\newtheorem{definition}{\textbf{Definition}}
\newtheorem{assumption}{\textbf{Assumption}}
\begin{document}

\title{How to Secure Distributed   Filters\\ Under Sensor Attacks}

\author{Xingkang~He,  
	Xiaoqiang Ren,
	Henrik Sandberg,
	Karl H. Johansson~\IEEEmembership{Fellow,~IEEE} 
\thanks{The work is supported by Knut \& Alice Wallenberg foundation, and by Swedish Research Council.}
\thanks{X. He, H. Sandberg and K. H. Johansson are with Division of Decision and Control Systems, School of Electrical Engineering and Computer Science, KTH Royal Institute of Technology,  and they are also affiliated with Digital Futures, Sweden ((xingkang,hsan,kallej)@kth.se).}%
\thanks{	X. Ren is with  Shanghai Key Laboratory of Power Station Automation Technology, School of Mechatronic Engineering and Automation, Shanghai University, Shanghai, 200444, China.  (xqren@shu.edu.cn)}
}


\maketitle


\begin{abstract}	
We study how to secure distributed filters for linear time-invariant systems with bounded noise under false-data injection  attacks.
A malicious attacker is able to arbitrarily manipulate the observations  for   a time-varying and unknown subset of the sensors. 
We first propose a recursive distributed filter consisting of two steps at each update. The first step employs a saturation-like scheme, which gives a small gain if the innovation is  large corresponding to a potential attack. The second step is a consensus operation of state estimates among neighboring sensors.
We prove the  estimation error is upper bounded if the filter parameters satisfy a  condition. 
We further analyze the feasibility of the condition and connect it to  sparse observability in the centralized case.
When the attacked sensor set is known to be time-invariant, the secured filter is modified by adding an  online  local attack detector.
The detector is able to identify the attacked sensors whose observation innovations are larger than the detection thresholds. Also, with more attacked sensors being detected, the thresholds will adaptively adjust to reduce the space of the stealthy attack signals.
The resilience of the secured filter with detection is verified by an explicit relationship between the upper bound of the estimation error and the number of detected attacked sensors.
Moreover, for the noise-free case, we prove that the state estimate of each sensor  asymptotically converges to the system state under certain conditions. 
Numerical simulations are   provided to illustrate the developed results.
\end{abstract}

\section{Introduction}
A cyber-physical system (CPS) is a physical system controlled and monitored by computer-based algorithms.   During  recent years, numerous applications in sensor networks, vehicle networks, process control, smart grid, etc, have been  investigated.  With higher integration of large-scale computer networks and complex physical processes, these systems
are confronting more security issues both in the cyber and physical layers. 
Thus, the research   on CPS security  is attracting more and more attention.

Sensors and sensor networks are  utilized to collect  environmental data in a CPS. The quality of these sensors is essential for decision making.
However,  with the increasing number  of complex tasks and the large-scale deployment of cheap and low-quality sensors,  the vulnerability of   system operation is inevitably increased. In this paper, we consider the 
false-data injection (FDI)  attacks in sensors networks, which is  illustrated in Fig.~\ref{fig:attack_networks}, where a distributed sensor network with 30 sensors is deployed to collaboratively observe the state of a CPS. In this case,   6 sensors in red are under attack in the sense that their observations can be arbitrarily manipulated. We are interested to find a  distributed filter to estimate the system state by employing the information provided by the sensor network in Fig.~\ref{fig:attack_networks}.

\begin{figure}[t]
	\centering
	\begin{tikzpicture}[scale=0.5, transform shape,line width=1pt]
	\node [draw,shape=circle,line width=2pt,minimum size=0.7cm] (1) at (0, 0) {1};
	\node[draw,shape=circle,line width=2pt,minimum size=0.7cm] (2) at +(2*1,0) {2};
	\node[draw,shape=circle,line width=2pt,fill=red,minimum size=0.7cm] (3) at +(2*2,0) {3};
	\node[draw,shape=circle,line width=2pt,minimum size=0.7cm] (4) at +(2*3,0) {4};
	\node[draw,shape=circle,line width=2pt,minimum size=0.7cm] (5) at +(2*4,0) {5};
	\node[draw,shape=circle,line width=2pt,minimum size=0.7cm] (6) at +(2*5,0) {6};
	\node[draw,shape=circle,line width=2pt,minimum size=0.7cm] (7) at (0, -2) {7};
	\node[draw,shape=circle,line width=2pt,minimum size=0.7cm] (8) at +(2,-2) {8};
	\node[draw,shape=circle,line width=2pt,minimum size=0.7cm] (9) at +(2*2,-2) {9};
	\node[draw,shape=circle,line width=2pt,minimum size=0.7cm] (10) at +(2*3,-2) {10};
	\node[draw,shape=circle,line width=2pt,minimum size=0.7cm] (11) at +(2*4,-2) {11};
	\node[draw,shape=circle,line width=2pt,fill=red,minimum size=0.7cm] (12) at +(2*5,-2) {12};
	\node[draw,shape=circle,line width=2pt,fill=red,minimum size=0.7cm] (13) at (0, -2*2) {13};
	\node[draw,shape=circle,line width=2pt,minimum size=0.7cm] (14) at +(2,-2*2) {14};
	\node[draw,shape=circle,line width=2pt,fill=red] (15) at +(2*2,-2*2) {15};
	\node[draw,shape=circle,line width=2pt,minimum size=0.7cm] (16) at +(2*3,-2*2) {16};
	\node[draw,shape=circle,line width=2pt,minimum size=0.7cm] (17) at +(2*4,-2*2) {17};
	\node[draw,shape=circle,line width=2pt,minimum size=0.7cm] (18) at +(2*5,-2*2) {18};
	\node[draw,shape=circle,line width=2pt,minimum size=0.7cm] (19) at (0, -2*3) {19};
	\node[draw,shape=circle,line width=2pt,minimum size=0.7cm] (20) at +(2,-2*3) {20};
	\node[draw,shape=circle,line width=2pt,minimum size=0.7cm] (21) at +(2*2,-2*3) {21};
	\node[draw,shape=circle,line width=2pt,minimum size=0.7cm] (22) at +(2*3,-2*3) {22};
	\node[draw,shape=circle,line width=2pt,fill=red] (23) at +(2*4,-2*3) {23};
	\node[draw,shape=circle,line width=2pt,minimum size=0.7cm] (24) at +(2*5,-2*3) {24};
	\node[draw,shape=circle,line width=2pt,minimum size=0.7cm] (25) at (0, -2*4) {25};
	\node[draw,shape=circle,line width=2pt,minimum size=0.7cm] (26) at +(2,-2*4) {26};
	\node[draw,shape=circle,line width=2pt,minimum size=0.7cm] (27) at +(2*2,-2*4) {27};
	\node[draw,shape=circle,line width=2pt,fill=red,minimum size=0.7cm] (28) at +(2*3,-2*4) {28};
	\node[draw,shape=circle,line width=2pt,minimum size=0.7cm] (29) at +(2*4,-2*4) {29};
	\node[draw,shape=circle,line width=2pt,minimum size=0.7cm] (30) at +(2*5,-2*4) {30};
	\node[draw,shape=circle,line width=2pt,minimum size=0.7cm] (normal) at +(2*1,-2*5) {};
		\draw  (2*1.9,-2*5) node {attack-free sensor};	
	\node[draw,shape=circle,line width=2pt,fill=red,minimum size=0.7cm](attacked) at +(2*3.2,-2*5) {};
		\draw  (2*4,-2*5) node {attacked sensor};
	\foreach \from/\to in {1/2,1/7,2/8,2/9, 3/4,3/9,3/10,4/11, 5/10, 5/12,6/12,7/14,8/15,9/14,9/16,11/16,11/17,11/18,13/19,13/20,14/20,15/22,16/22,17/12,17/23,18/24,19/26,19/25,20/21,20/27,21/26,21/16,22/28,22/23,22/29,24/29,24/30,27/28}
	\draw [black] (\from) -- (\to);
	\end{tikzpicture}
	\caption{A distributed sensor network under FDI attacks.}
	\label{fig:attack_networks}
\end{figure}
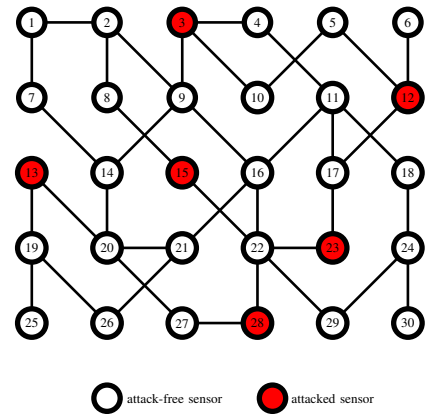

A large number of distributed filters for sensor networks have been proposed  in the  literature, e.g., \cite{Battistelli2015Consensus,liu2015event}.  These filters, however, would  not work well   in attack scenarios like Fig.~\ref{fig:attack_networks}. {In order to   degrade  filter performance, an attacker can 
strategically inject false data into the observations of   attacked sensors based on its knowledge of   systems. When     
probability distributions of  the observations are affected, the  filters prior-designed based on the  distributions are no longer effective.}
For the scenario in Fig.~\ref{fig:attack_networks}, the following questions are answered in this paper:
\begin{enumerate}
	\item How to design a  distributed filter  such that it is   resilient {when the sensor network is under FDI attacks}? 
	\item What is the maximal number of  sensors { under FDI attacks}, such that  filter  stability is  guaranteed? 
	\item How to detect which sensors are attacked  and how to  remove their influence on the filter performance?
\end{enumerate}

\subsection*{Related Work}
{The security problems of CPSs have been extensively studied in the literature
 based on  centralized frameworks, where 
  a data center   is  able to collect and process the data from all sensors. 
	 }
 To find out whether sensors are under attack and  to identify the attack signals  inserted to the systems, a study on attack detection and identification for CPSs was given in \cite{pasqualetti2013attack}, where  the design methods and analysis techniques for centralized monitors were discussed as well. 
A probabilistic approach was given in \cite{ren2019secure} to estimate a static parameter in a fusion center under   sparse FDI  sensor attacks. 
To obtain attack-resilient state estimates, some centralized state estimators or observers were proposed based on optimization techniques \cite{fawzi2014secure,pajic2017attack,pajic2017design,shoukry2017secure,shoukry2018smt,han2019convex}, which usually face heavy computational complexity in the brute force search.

{In comparison with centralized frameworks,  
	 distributed frameworks have no data center. Distributed methods rely on local computation and neighbor communication, thus they outweigh centralized methods in  scalability for large networks and robustness to failures.} 
In recent years,  some investigations in the study of sensor networks  under Byzantine   attacks/failures have been made for the distributed state estimation of dynamical systems \cite{mitra2019byzantine,mitra2019resilient},  the distributed  identification of a static vector parameter \cite{su2019finite}, and the distributed stochastic gradient descent \cite{blanchard2017machine}. 
{Although these papers studied the worst sensor attacks (i.e.,   Byzantine attacks), they   require complete connectivity or strong robustness of  graphs, which would be quite restrictive for the systems suffering milder attacks (e.g., FDI attacks).} In this direction, 
  a distributed observer with attack detection  was proposed to deal with a class of bias attacks in the observer update or sensor communication \cite{deghat2019detection}.
Distributed estimation for a static parameter under FDI sensor attacks was studied in \cite{chen2019resilient,chen2019resilienttsp}.  In \cite{an2019distributed},  a distributed optimization based method was utilized to achieve  convergence of the observer  under   sparse observability for linear time-invariant (LTI) systems \cite{chong2015observability} suffering FDI sensor attacks. Nevertheless, the results relied on the redesign of topology graph and infinite sensor communications between two observation updates. The authors in \cite{chen2017distributed} studied the distributed dimensionality reduction fusion
estimation for CPSs under   denial-of-service attacks. 
 In \cite{boem2017distributed},  for   FDI attacks  in communication networks, a distributed detection problem was studied for a group of interconnected subsystems, and an extended application to   DC microgrids was given in \cite{gallo2020distributed}.  In \cite{forti2018distributed}, a Bayesian framework based  joint distributed attack detection and state estimation were investigated in a cluster-based sensor network by considering     FDI attacks in the communication between remote sensors and fusion nodes. 
 However,  the accurate probability distribution of attacks was required. To the knowledge of the authors, there were few results considering how to achieve the co-design of a distributed estimator and an attack detector.

\subsection*{Objectives and Contributions of the Paper}
{
	In this paper, we study the distributed state estimation problem for LTI systems with bounded noise  over a sensor network, where the observations of  a time-varying and unknown subset of sensors are arbitrarily manipulated by a malicious attacker  through FDI attacks. 
	
	The objective of this paper is four-fold:	
	i) Design a resilient distributed filter for each sensor with the potentially compromised observations and the data received from neighboring sensors; ii) Analyze the main properties of the filter, including the  estimation error boundedness; iii) Design an attack detection based filter if the compromised sensor set is known to be time-invariant; iv) Analyze the main properties of the detection based filter.

Corresponding to the four objectives, this paper makes four contributions summarized in  the following:}

 i)  We design a secured distributed filter consisting of two steps (Algorithm~\ref{alg:A}).   The first step employs a saturation-like scheme, which gives a small gain if the innovation is  large corresponding to a potential attack.	
	The second step is a consensus operation of state estimates among neighboring sensors. 
	
  ii) We investigate some properties of the secured filter. First, 
	we prove that the  estimation error is upper bounded if the filter parameters satisfy a condition, whose 
	feasibility   is studied by providing an  easy-to-check sufficient and necessary condition (Theorems~\ref{thm_stability} and \ref{thm_iff}).  
	 We further connect this condition to sparse observability in the centralized case (Proposition \ref{lem_iff}).
 Moreover, we provide a  condition such that the observations of the attack-free sensors will  not be saturated after a finite time. Then,  a tighter error bound is obtained (Theorem~\ref{thm_normal}).

  iii)	{ We modify the secured distributed filter by adding an attack detector (Algorithm~\ref{alg:B}), when the   set of  attacked sensors  is known to be time-invariant.} The detector is able to identify the attacked sensors whose observation innovations are larger than the detector thresholds (Proposition \ref{lem_bound_normal}).  Moreover, with more attacked sensors being detected, the thresholds will adaptively adjust to reduce the space of the stealthy attack signals.

 iv) We study some properties of the secured filter with  attack detection. First,   
	the resilience of the  filter   is verified by an explicit relationship between the upper bound of the estimation error and the number of detected attacked sensors (Theorem~\ref{thm_bounds_detected}).
	Moreover, for the noise-free case, we prove that the state estimate of each sensor   asymptotically converges to the system state under certain conditions (Theorem~\ref{thm_observer}).

{This paper designs a filter with an innovation-dependent update gain,  essentially different from  conventional filters with statistics-based gains (e.g., Kalman filter), in order to confine the influence of attack signals to the  estimation.} To handle the technical  difficulties in performance analysis,  a new  tool inspired by bounded-input  bounded-output (BIBO) stability is provided to analyze boundedness of the estimation error.
The distribution assumption on  attack signals in \cite{deghat2019detection} is removed in this work by allowing that the attacker  can inject any attack signals. Moreover, the assumption that the attacked sensor set is fixed over time in both centralized frameworks \cite{pajic2017attack,fawzi2014secure,shoukry2017secure,han2019convex,ren2019secure,nakahira2018attack,shoukry2018smt} and distributed frameworks \cite{mitra2019byzantine,mitra2019resilient,an2019distributed,lee2020fully} is extended to the time-varying case.
{The robustness requirement of communication graphs in \cite{mitra2019byzantine,mitra2019resilient} for a wider range of attacks and  the requirement of infinite communication rate between two updates  in \cite{an2019distributed} are both removed in this paper.}
This paper builds on  the  preliminary work   presented in \cite{he2019securecdc} and
\cite{He2020Secured}. The main difference is four-fold. First, the set of the
attacked sensors is extended from the time-invariant case to the
time-varying case. Second, a new section dealing with attack
detection and sensor isolation is added. Third, the results in \cite{he2019securecdc} and
\cite{He2020Secured} are generalized  and new theoretical
results with proofs are added. Fourth, more
literature comparisons and simulation results are provided.

The remainder of the paper is organized as follows: Section~\ref{sec_formulation} is on the problem formulation. Section \ref{sec_filter} provides the secured distributed filter and its performance analysis.  The secured distributed  filter with an online attack detector is studied in Section \ref{sec:detector}. 
After  numerical simulations in Section~\ref{sec_simu}, the paper is concluded in   Section \ref{sec_conclusion}.
The main proofs are given in Appendix.

\emph{Notations.}  $\mathbb{R}^n$ is the set of $n$-dimensional real vectors. $\mathbb{R}^+$ and $\mathbb{Z}^+$ are the sets of  positive real scalars and integers, respectively. $\mathbb{R}^{n\times m}$ is the set of real matrices with $n$ rows and $m$ columns. 
$\diag\{\cdot\}$   represents the diagonalization operator.  
$I_{n}$ stands for the $n$-dimensional square identity matrix. 
$\textbf{1}_N$ stands for the $N$-dimensional vector with all elements being one. 
{The superscript ``$\sf T$" represents the transpose. }
$A\otimes B$ is the Kronecker product of $A$ and $B$.  $\norm{x}$ is the 2-norm of a vector $x$. $\norm{A}$ is the induced 2-norm of matrix $A$, i.e., $\norm{A}=\sup\limits_{x\neq 0}\frac{\norm{Ax}}{\norm{x}}$.  
$\lambda_{\min}(A)$, $\lambda_2(A)$ and $\lambda_{\max}(A)$ are the minimum, second minimum and maximum eigenvalues of a real-valued symmetric matrix $A$, respectively. 
$|\Gamma|$ is the cardinality of the set $\Gamma.$ 
$\min\{a,b\}$ means the minimum between the real-valued scalars $a$ and $b$. For a set $\mathcal{A}$, the indicator function $\mathbb I_{a\in \mathcal{A}}=1$, if $a\in \mathcal{A}$; $\mathbb I_{a\in \mathcal{A}}=0$, otherwise. $\ceil{\cdot}$ is the ceiling function.

\section{Problem Formulation}\label{sec_formulation}
In this section, we first provide some graph preliminaries and then set up the problem of this paper.
\subsection{Graph Preliminaries}
We model the communication topology of $N$ sensors by an undirected graph $\mathcal{G=(V,E)}$ without self loops, where $\mathcal{V}=\{1,2,\dots,N\}$ stands for the set of  nodes, and $\mathcal{E}\subseteq \mathcal{V}\times \mathcal{V}$ is the  set of edges.  If there is an edge $(j,i)\in \mathcal{E}$, node $i$ can exchange information with node $j$, then node $j$  is called a  neighbor of node $i$, and vice versa. 
Let the neighbor set of node $i$ be $\mathcal{N}_{i}:=\{j\in\mathcal{V}|(j,i)\in \mathcal{E}\}$. The degree matrix of $\mathcal{G}$ is $D_{\mathcal{G}}=\diag\{|\mathcal{N}_{1}|,\dots,|\mathcal{N}_{N}|\}$. The adjacency matrix is $\mathbb{A}_{\mathcal{G}}=[a_{i,j}]$, where  $a_{i,j}=1$ if $(i,j)\in \mathcal{E}$,  otherwise $a_{i,j}=0$. $\mathcal{L}=D_{\mathcal{G}}-\mathbb{A}_{\mathcal{G}}$ is the Laplacian matrix. 
Graph $\mathcal{G}$ is  connected if for any pair of two different nodes $i_{1},i_{l}$, there exists a  path from $i_{1}$ to $i_{l}$ consisting of edges $(i_{1},i_{2}),(i_{2},i_{3}),\ldots,(i_{l-1},i_{l})$.   
On the connectivity of a graph, we have the following proposition.
\begin{proposition}\cite{Mesbahi2010Graph}\label{thm_graph}
	The undirected graph $\mathcal{G}$ is	 connected if and only if $\lambda_{2}(\mathcal{L})>0$.
\end{proposition}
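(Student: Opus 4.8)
The plan is to exploit the quadratic-form structure of the Laplacian to characterize its null space, and then translate the dimension of that null space into a statement about connectivity. First I would establish that $\mathcal{L}=D_{\mathcal{G}}-\mathbb{A}_{\mathcal{G}}$ is symmetric and positive semidefinite by deriving, for every $x\in\mathbb{R}^{N}$, the identity
\[
x^{T}\mathcal{L}x \;=\; \frac{1}{2}\sum_{(i,j)\in\mathcal{E}}(x_{i}-x_{j})^{2}\;\ge\;0 ,
\]
which follows by expanding the degree and adjacency terms edge by edge. Since $\mathcal{L}$ is symmetric and positive semidefinite, all its eigenvalues are real and nonnegative and may be ordered as $0\le \lambda_{\min}(\mathcal{L})\le \lambda_{2}(\mathcal{L})\le\cdots\le\lambda_{\max}(\mathcal{L})$. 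Because each row of $\mathcal{L}$ sums to zero, we have $\mathcal{L}\textbf{1}_{N}=0$, so $0$ is always an eigenvalue and $\lambda_{\min}(\mathcal{L})=0$. Consequently the claim $\lambda_{2}(\mathcal{L})>0$ is equivalent to the eigenvalue $0$ being simple, i.e., to the null space of $\mathcal{L}$ being exactly the one-dimensional span of $\textbf{1}_{N}$.

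The key step is the null-space characterization: since $\mathcal{L}$ is positive semidefinite, $\mathcal{L}x=0$ holds if and only if $x^{T}\mathcal{L}x=0$, and by the identity above this is equivalent to $x_{i}=x_{j}$ for every edge $(i,j)\in\mathcal{E}$. For the sufficiency direction (connected $\Rightarrow \lambda_{2}(\mathcal{L})>0$), I would assume $\mathcal{G}$ is connected and take any $x$ with $\mathcal{L}x=0$. Then $x$ is constant along every edge; since connectivity supplies a path between any pair of nodes $i_{1},i_{l}$ consisting of edges $(i_{1},i_{2}),\ldots,(i_{l-1},i_{l})$, equality propagates along the path and $x$ must be constant throughout, so $x\in\mathrm{span}\{\textbf{1}_{N}\}$. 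The null space is therefore one-dimensional, forcing $\lambda_{2}(\mathcal{L})>0$.

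For the necessity direction I would argue the contrapositive (disconnected $\Rightarrow \lambda_{2}(\mathcal{L})=0$): if $\mathcal{G}$ is not connected, partition $\mathcal{V}$ into two nonempty subsets $\mathcal{V}_{1},\mathcal{V}_{2}$ with no edges between them, and let $x$ be the indicator vector of $\mathcal{V}_{1}$. This $x$ is constant on each side of the partition, hence satisfies $x_{i}=x_{j}$ along every edge and thus $\mathcal{L}x=0$; being linearly independent of $\textbf{1}_{N}$, it exhibits a null space of dimension at least two, so $\lambda_{2}(\mathcal{L})=0$. The only points requiring care are the propagation of equality along a path (an induction on path length) and the linear independence of the indicator vector from $\textbf{1}_{N}$; both are elementary, so I expect no substantial obstacle beyond the quadratic-form identity that underpins the whole argument.
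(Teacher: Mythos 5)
Your proof is correct: the quadratic-form identity, the positive-semidefinite null-space characterization, the path-propagation argument for sufficiency, and the component-indicator vector for necessity together constitute the standard and complete argument for this result. Note that the paper itself offers no proof of this proposition — it is cited directly from \cite{Mesbahi2010Graph} — so your write-up simply supplies the textbook argument behind that citation; the only point worth double-checking is the factor $\tfrac{1}{2}$ in the identity $x^{T}\mathcal{L}x=\tfrac{1}{2}\sum_{(i,j)\in\mathcal{E}}(x_{i}-x_{j})^{2}$, which is correct under the paper's convention that $\mathcal{E}\subseteq\mathcal{V}\times\mathcal{V}$ contains both ordered copies of each undirected edge, and should be dropped if each edge is counted once.
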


\subsection{System Model}
For a sensor network $\mathcal{G}$ under {FDI attacks}, we illustrate the scenario in Fig.~\ref{fig:attack_networks}, where each senor is equipped with a filter to estimate the system state (see Fig.~\ref{fig:diag} for a diagram). The state-space system model is given as follows
\begin{equation}\label{eq_system}
\begin{split}
x(t+1)&=Ax(t)+w(t)\\
y_i(t)&=C_ix(t)+v_i(t)+a_i(t),i=1,\dots,N,
\end{split}
\end{equation}
where $x(t)\in\mathbb{R}^n$ is the unknown system state, $w(t)\in\mathbb{R}^n$ the process  noise,  $v_i(t)\in\mathbb{R}$  the observation  noise, $a_i(t)\in\mathbb{R}$ the  attack signal inserted by some malicious attacker, and $y_i(t)\in\mathbb{R}$  the observation of sensor $i$, all at time $t$.  Moreover, $A\in\mathbb{R}^{n\times n}$ is the system state transition matrix, and $C_i\in\mathbb{R}^{1\times n}$ is the observation vector of sensor $i$. {Both $A$ and $C_i$ are known to each sensor. We do not assume that $(A,C_i)$ is   observable.} Without losing generality, we assume that the observation vectors are normalized, i.e.,  $\norm{C_{i}}= 1, i\in\mathcal{V}=\{1,\dots,N\}$. Otherwise, we can reconstruct the observation equation of system \eqref{eq_system}.

In this paper, we consider the observation equation in \eqref{eq_system} with scalar outputs for each sensor. This conforms with the centralized framework \cite{fawzi2014secure}, where each row vector of the centralized observation matrix stands for the observation vector of one sensor. For the case that  outputs of some sensors are not scalar, we can replace each of these sensors by a set of virtual sensors with scalar ouputs, which are completely connected and connected to the neighbors of the original sensor.
 Then, the problem will reduce to the one studied in this paper.

\begin{figure}[t]
	\centering
	\begin{tikzpicture}[scale=0.8, transform shape,line width=1pt]
	\draw (0,0)  node[rectangle,draw,scale=1.5,line width=2pt] (plant)   {Plant};
\draw (0,1)  node (w)   {$w$};
\draw (2,1.5)  node (v)   {$v_i$};
\draw (4,-1.5)  node (a)   {$a_i$};
\draw (6,1.5)  node (x)   {$\{\hat x_j\}_{j\in\mathcal{N}_i}$};
\draw (2,0)  node[circle,draw] (plus1)   {+};	
\draw (4,0)  node[circle,draw] (plus2)   {+};
\draw (6,0)  node[rectangle,draw,scale=1.5,line width=2pt] (filter)   {Filter};
\draw (8,0)  node  (output)   {};
\draw[draw=black,dashed] (1,-0.7) rectangle ++(6,1.5);  
\draw[->] (w) -- (plant);
\draw[->] (v) -- (plus1);
\draw[->] (a) -- (plus2);
\draw[->] (x) -- (filter);
\draw[->] (plant) -- (plus1);
\draw[->] (plus1) -- (plus2);
\path [->] (plus2) edge node[above] {$y_i$} (filter);
\path [->] (filter) edge node[above] {$\hat x_i$} (output);
	\end{tikzpicture}
	\caption{Each sensor is equipped with a filter providing an estimate $\hat x_i$ of state $x$. The sensor observation $y_i$ is potentially compromised through an attack signal $a_i$.}
	\label{fig:diag}
\end{figure}
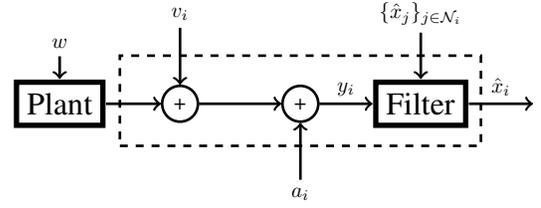

The following assumptions are  needed.
\begin{assumption}\label{ass_bounds}
	The following conditions hold
	\begin{align*}
	&\sup_{t}\norm{w(t)}\leq b_{w}\\
	&\max_{i\in\mathcal{V}}\sup_{t}\norm{v_i(t)}\leq b_{v}\\
	&\norm{\hat x(0)-x(0)}\leq \eta_0, 
	\end{align*}
	where $\hat x(0)$ is the   estimate of $x(0)$ shared by all sensors, {and the upper bounds are known to each sensor.}
\end{assumption}

\begin{assumption}\label{ass_graph}
	Communication graph $\mathcal{G}$ is   connected.
\end{assumption}

\subsection{Attack Model}

{To deteriorate  estimation performance, a malicious attacker can  compromise the observations of some targeted sensors by FDI attacks.} 
However, due to  resource limitation, the attacker can only attack a subset of all sensors at each time. 
Let $\mathcal{A}(t)$ and $\mathcal{A}^c(t)$ be the  set  of attacked sensors and the set of attack-free sensors  at time $t$, respectively. It holds that $|\mathcal{A}(t)|+|\mathcal{A}^c(t)|=N.$ We require the following assumption on the attack model.
\begin{assumption}\label{ass_attacker}
The attacker can implement the following   {FDI} attacks to system \eqref{eq_system}: for $t=1,2,\dots,$
\begin{equation}\label{attack_input}
\begin{split}
&a_i(t)\in \mathbb{R}, i\in\mathcal{A}(t), |\mathcal{A}(t)|\leq s \\
&a_i(t)=0, i\in \mathcal{A}^c(t),
\end{split}
\end{equation}
where   sets $\mathcal{A}(t)$ and $\mathcal{A}^c(t)$ are unknown to each sensor, but  $s$ is known.
\end{assumption}

	In Assumption~\ref{ass_attacker}, we consider the worst scenario on { FDI attacks} that the attacker can inject attack signals with any distribution, which is more general than  results in the  literature \cite{deghat2019detection}. Moreover, Assumption~\ref{ass_attacker} removes the requirement in \cite{pajic2017attack,fawzi2014secure,shoukry2017secure,han2019convex,ren2019secure,nakahira2018attack,shoukry2018smt,mitra2018secure} that the attacked sensor set is fixed over time.

\subsection{Problem of Interest}

{Design a resilient distributed filter $\{\hat x_i(t)\}_{i\in\mathcal{V}}$ for   system \eqref{eq_system} under Assumptions~\ref{ass_bounds}--\ref{ass_attacker}  by employing   potentially compromised sensor observations $\{y_{i}(l)\}_{l=1}^t$  and the received neighbor messages  over   communication graph $\mathcal{G}$, such that 
	\begin{align*}
\limsup_{t\rightarrow\infty}\|\hat x_i(t)-x(t)\|\leq \Delta,
\end{align*}
where $\Delta\geq 0$    reflects the performance of the proposed filter. 
Moreover, find the answers to the questions 1)--3)   in the introduction.

}

\section{Secured Distributed  Filter}\label{sec_filter}
In this section, we first design  a secured distributed filter for each sensor and then analyze  some properties of the filter.
\subsection{Filter Design}\label{subsec:filter}
We consider the filter with two steps, namely, observation update and estimate consensus. In the step of  observation update, by choosing $\beta>0$,  we design a saturation-like scheme to utilize   observation $y_{i}(t)$ as follows
\begin{align}\label{alg_update}
\tilde x_{i}(t)=&A\hat x_{i}(t-1)+ k_{i}(t)C_i^{\sf T}(y_{i}(t)-C_iA\hat x_{i}(t-1)), 
\end{align}
where 
\begin{align}\label{eq_K}
k_{i}(t)=\begin{cases}
1,\text{ if } |y_{i}(t)-C_{i}A\hat x_{i}(t-1)|\leq\beta,\\
\frac{\beta}{|y_{i}(t)-C_{i}A\hat x_{i}(t-1)|}, \text{ otherwise}.
\end{cases}
\end{align}
Different from the gains  of conventional filters or state observers (e.g., Kalman filter),   gain $k_{i}(t)$  is related to the estimation innovation (i.e., $y_{i}(t)-C_iA\hat x_{i}(t-1)$). 
The design of $k_{i}(t)$ in (\ref{eq_K}) makes sense, since if the  innovation  is   large,   observation $y_i(t)$ is more likely to be compromised. 
Note that $|k_{i}(t)(y_{i}(t)-C_iA\hat x_{i}(t-1))|\leq\beta$, which ensures that the attacker has limited influence to the local update. 
 Scalar $\beta$  is an observation confidence parameter reflecting the usage tradeoff between attack-free observations and attacked observations.	
 If $\beta$ is very large,   almost all attack-free observations will be utilized without saturation. However, it will give much   space for the attacker  to deteriorate the estimation performance. If $\beta$ is very small, although  most  attack signals $\{a_i(t)\}$ may be filtered,  attack-free observations will contribute little to the estimation.
 The design of  $\beta$  will be discussed in the next subsection. 

In the step of estimate consensus, {we consider a two-time-scale scheme with communication rate $L\geq 1$, i.e.,  each sensor can communicate with its neighbors for $L\geq 1$ times between two measurement updates.}
%
For $l=1,2,\dots,L,$ and $\alpha>0,$
\begin{align}\label{alg_consensus}
\hat x_{i,l}(t)=\hat x_{i,l-1}(t)-\alpha\sum_{j\in\mathcal{N}_{i}}(\hat x_{i,l-1}(t)-\hat x_{j,l-1}(t)),
\end{align}
with $\hat x_{i,0}(t)=\tilde x_{i}(t)$ and  $\hat x_{i}(t)=\hat x_{i,L}(t).$ In the $l$-th communication, sensor $j$  transmits its estimate $\hat x_{j,l-1}(t)$ to its neighbors, $l=1,\dots,L.$
	The term $\alpha\sum_{j\in\mathcal{N}_{i}}(\hat x_{i,l-1}(t)-\hat x_{j,l-1}(t))$ is to make sensor estimates  tend to consensus.
	Communication rate $L$ is vital to guarantee the bounded estimation error   especially for the case that each subsystem is not observable (i.e., $(A,C_i)$ is not observable). 
	It can be proven that if   communicate rate $L$ goes to infinity and   parameter $\alpha$ is properly designed,    estimates $\{\hat x_{i}(t)\}_{i=1}^N$ will converge to the same vector.
	However,  an infinite communication  rate  in \cite{zhao2017resilient,chen2019resilient}   is not necessary in this work. The  design of $L$ and $\alpha$ is studied in the next subsection. 
By (\ref{alg_update})--(\ref{alg_consensus}), we provide the   {distributed saturation-based  filter}  in Algorithm~\ref{alg:A}.
\begin{algorithm}[t]
	\caption{{Distributed  Saturation-Based  Filter}}
	\label{alg:A}
	\begin{algorithmic}[1]
				\STATE {\textbf{Initial setting:} ($\hat x_{i}(0),\alpha,\beta,L$)}\\		\vskip 2pt
				\FOR{$t=1,2,\dots$}
		\STATE {\textbf{Observation update:}}\\		\vskip 2pt
$k_{i}(t)=\min\{1,\frac{\beta}{|y_{i}(t)-C_{i}A\hat x_{i}(t-1)|}\}$\\\vskip 2pt
		$\tilde x_{i}(t)=A\hat x_{i}(t-1)
		+ k_{i}(t)C_i^{\sf T}(y_{i}(t)-C_iA\hat x_{i}(t-1))\nonumber$\\   		\vskip 2pt     
		\STATE {\textbf{Estimate consensus}: Let $\hat x_{i,0}(t)=\tilde x_{i}(t)$}\\		\vskip 2pt
		\FOR{$l=1,\dots,L$}
		\STATE{
	  Sensor $i$ receives $\hat x_{j,l-1}(t)$ from neighbor sensor $j$,  \\		\vskip 2pt
		$ \hat x_{i,l}(t)=\hat x_{i,l-1}(t)-\alpha\sum_{j\in\mathcal{N}_{i}}(\hat x_{i,l-1}(t)-\hat x_{j,l-1}(t))$\\		\vskip 2pt	
	}
        \ENDFOR
		\STATE {Let $\hat x_{i}(t)=\hat x_{i,L}(t).$ }
		\ENDFOR
	\end{algorithmic}
\end{algorithm}
\subsection{Performance Analysis}\label{sec_analysis}
	Since   filtering gains $\{k_i(t)\}_{i=1}^N$ in \eqref{eq_K} are related to the state estimates and  potential compromised observations, the common stability analysis approaches, such as Lyapunov methods, may not be directly utilized.
This is the main technique challenge of this paper. 
Inspired by BIBO stability, 
 we provide the following lemma 
to analyze boundedness of the estimation error.

\begin{lemma}\label{lem_stability}
	Consider a one-dimensional equation $x_{t+1}=F(x_t)x_t+q_0$ at time $t\geq 0$, where $x_0\geq 0$,   $q_0\geq 0,$ and $F(\cdot)\in[0,1]$ is a monotonically non-decreasing function. If   set $\Gamma=\{t\geq 1|x_{t}\leq x_{t-1}\}$ is non-empty, the following conclusions hold:
	\begin{enumerate}
				\item If $q_0\neq 0,$   for $\forall t_0\in \Gamma,$
				 $$x_t\leq F^{t-t_0}(x_{ t_0})x_{ t_0}+q_0\frac{1-F^{t-t_0}(x_{ t_0})}{1-F(x_{ t_0})},t\geq t_0;$$\vskip 5pt
		\item $\sup\limits_{t\geq  t_0}x_t\leq x_{ t_0}$, $\forall t_0\in \Gamma;$\vskip 5pt

		\item $\limsup\limits_{t\rightarrow \infty}x_t\leq \inf\limits_{t_0\in \Gamma} x_{ t_0}.$
	\end{enumerate}	

\end{lemma}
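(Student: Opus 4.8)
The plan is to reduce all three claims to a single structural fact: the set $\Gamma$ is \emph{forward closed}, i.e. once the sequence stops increasing at some step it never increases again. First I would record by a trivial induction that $x_t \geq 0$ for all $t$, since $F \geq 0$, $x_0 \geq 0$ and $q_0 \geq 0$. The heart of the argument is the claim that $t_0 \in \Gamma \Rightarrow t_0 + 1 \in \Gamma$. To prove it, write
\begin{align*}
x_{t_0+1} &= F(x_{t_0})x_{t_0} + q_0, & x_{t_0} &= F(x_{t_0-1})x_{t_0-1} + q_0.
\end{align*}
Since $t_0 \in \Gamma$ means $x_{t_0} \leq x_{t_0-1}$, monotonicity of $F$ gives $F(x_{t_0}) \leq F(x_{t_0-1})$, and because both factors are non-negative I can chain $F(x_{t_0})x_{t_0} \leq F(x_{t_0-1})x_{t_0} \leq F(x_{t_0-1})x_{t_0-1}$, so $x_{t_0+1}\le x_{t_0}$, i.e. $t_0+1\in\Gamma$. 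By induction the tail $\{x_t\}_{t\ge t_0}$ is non-increasing for every $t_0\in\Gamma$.

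Given this, conclusion 2 is immediate: a non-increasing tail starting at $t_0$ satisfies $\sup_{t\ge t_0} x_t = x_{t_0}$. For conclusion 3, the same tail is non-increasing and bounded below by $0$, hence convergent; its limit equals $\limsup_{t\to\infty}x_t$ and is $\le x_{t_0}$ for \emph{every} $t_0\in\Gamma$, so taking the infimum over $\Gamma$ yields $\limsup_{t\to\infty}x_t\le \inf_{t_0\in\Gamma}x_{t_0}$.

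For conclusion 1 I first have to rule out $F(x_{t_0})=1$: if it held, monotonicity together with $x_{t_0}\le x_{t_0-1}$ would force $F(x_{t_0-1})=1$, whence $x_{t_0}=x_{t_0-1}+q_0$; combined with $x_{t_0}\le x_{t_0-1}$ this gives $q_0=0$, contradicting $q_0\neq0$. So $c:=F(x_{t_0})<1$ and the displayed fraction is well defined. Then, using that the tail is non-increasing so that $x_t\le x_{t_0}$ and hence $F(x_t)\le c$ for all $t\ge t_0$, I obtain the comparison inequality $x_{t+1}=F(x_t)x_t+q_0\le c\,x_t+q_0$. Unrolling this from $t_0$ (equivalently, comparing with the solution of the linear recursion $u_{t+1}=cu_t+q_0$, $u_{t_0}=x_{t_0}$, which dominates $x_t$ by induction) produces $x_t\le c^{\,t-t_0}x_{t_0}+q_0\frac{1-c^{\,t-t_0}}{1-c}$, exactly the asserted bound.

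The main obstacle, and the only genuinely non-routine step, is the forward-closedness claim $t_0\in\Gamma\Rightarrow t_0+1\in\Gamma$; everything afterwards (monotone convergence and geometric summation) is standard. The delicate points to get right are that monotonicity of $F$ must be combined with non-negativity of $x_t$ to push the product inequality through, and that the hypothesis $q_0\neq0$ is precisely what guarantees $c=F(x_{t_0})<1$, so that the closed-form bound in conclusion 1 does not divide by zero.
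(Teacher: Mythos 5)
Your proposal is correct and follows essentially the same route as the paper: the inductive step showing the tail $\{x_t\}_{t\ge t_0}$ never increases again (the paper's proof of conclusion 2), then the geometric comparison $x_{t+1}\le F(x_{t_0})x_t+q_0$ with ratio $F(x_{t_0})<1$ for conclusion 1, and taking the infimum over $\Gamma$ for conclusion 3. Your explicit contradiction argument for $F(x_{t_0})<1$ is in fact slightly more careful than the paper, which merely asserts that $q_0\neq 0$ forces $F(x_{t_0})\in[0,1)$.
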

\begin{proof}
	See Appendix \ref{pf_lem_stability}.
\end{proof}
If we treat $x_t$ as an upper bound of the norm of the  estimation error, based on the knowledge of  $x_{t_0}$, we are able to use 1) to obtain a real-time upper bound of $x_t$, and apply 2) and 3) to obtain the uniform and asymptotic   bounds of $x_t$, respectively.  
To proceed, denote 
\begin{align}\label{eq_lambda}
\lambda_0:=\min\limits_{\mathcal{J}\subset\{1,2,\dots,N\}:|\mathcal{J}|=N-s}\lambda_{\min}\left( \sum_{i\in\mathcal{J}}C_{i}^{\sf T}C_{i}\right),
\end{align}
  where $s$ is the upper bound of the attacked sensor number, given in \eqref{attack_input}. 
  {
  		Since  $\sum_{i\in\mathcal{J}}C_{i}^TC_{i}$ is positive semi-definite and $s\leq N$, we have $\lambda_0\geq 0.$
  		Moreover, it holds that $\lambda_0\leq (N-s)\lambda_{\max}\left(C_{i}^TC_{i}\right)=N-s$,
  		where the  equality is   due to $\lambda_{\max}\left(C_{i}^TC_{i}\right)=\norm{C_i}^2=1$ assumed after   system   \eqref{eq_system}. Thus, $\lambda_0$ belongs to $[0, N-s]$.
  	}
To apply Lemma~\ref{lem_stability}, we construct    sequence $\{\rho_{t}\in\mathbb{R}|\rho_{t}\}$ in the following
\begin{align}\label{sequence}
\rho_{t+1}=F(\rho_{t})\rho_{t}+q_0,\qquad \rho_0=\eta_0
\end{align}
where $\eta_0$ is given in Assumption~\ref{ass_bounds} and
\begin{align}\label{notation}
\begin{split}
F(\rho_t)&=\norm{A}\left(1-\frac{k^*(\rho_t)}{N}\lambda_0\right),\\
k^*(\rho_t)&=\min\bigg\{1,\frac{\beta}{\norm{A}(p_0+\rho_t)+b_{w}+b_{v}}\bigg\},\\
q_0&=\frac{N-s}{N}(b_{w}+b_{v}+\norm{A}p_0)+b_{w}+\frac{s\beta}{N},\\
p_0&=\frac{\sqrt{N}\beta\gamma^L}{1-\norm{A}\gamma^L},\\
\gamma&=\frac{\lambda_{\max}(\mathcal{L})-\lambda_2(\mathcal{L})}{\lambda_{\max}(\mathcal{L})+\lambda_2(\mathcal{L})}.
\end{split}
\end{align}
{Under Assumption~\ref{ass_graph}, we have $\gamma\in[0,1)$. Define   $\gamma^{-1}=+\infty$ if $\gamma=0.$}
The following theorem studies boundedness of the  estimation error of Algorithm~\ref{alg:A}.
\begin{theorem}\label{thm_stability}
	(\textbf{Bounds})  Under Assumptions~\ref{ass_bounds}--\ref{ass_attacker}, consider Algorithm~\ref{alg:A} with $ \alpha=\frac{2}{\lambda_2(\mathcal{L})+\lambda_{\max}(\mathcal{L})}$.
	If there exist  $L>\ln \norm{A}/\ln \gamma^{-1}$, $\beta,\eta_0>0$, such that
	\begin{align}\label{condition_thm}
\eta_0(1-F(\eta_0))\geq q_0,
	\end{align}
	  set $\Gamma=\{t\geq 1|\rho_{t}\leq \rho_{t-1}\}$  is non-empty with $1\in \Gamma$,
	where    sequence $\{\rho_t\}$ is in \eqref{sequence}. Furthermore, for $i\in\mathcal{V}$,   estimation error $e_{i}(t)=\hat x_i(t)-x(t)$ satisfies the following properties:
	\begin{enumerate} 
			\item The estimation error is bounded at each time, i.e., 
			$\forall t_0\in \Gamma,$ $t\geq t_0$
			$$\norm{e_{i}(t)}\leq R(F(\rho_{ t_0}),t)+p(t),$$
				where
				\begin{align}\label{eq_p_R}
				\begin{split}
R(x,t)&=x^{t-t_0}\rho_{ t_0}+q_0\frac{1-x^{t-t_0}}{1-x},\\
p(t)&=\sqrt{N}\beta\gamma^L\frac{1-(\norm{A}\gamma^{L})^{t}}{1-\norm{A}\gamma^L}.
				\end{split}
				\end{align}
			\item The estimation error has a finite uniform upper bound, i.e., $\forall t_0\in \Gamma,$ $$\sup\limits_{t\geq  t_0}\norm{e_{i}(t)}\leq \rho_{ t_0}+\sup_{t\geq t_0}p(t).$$
			\item The   estimation error is asymptotically upper bounded, i.e.,
			$$\limsup\limits_{t\rightarrow \infty}\norm{e_{i}(t)}\leq \inf\limits_{t_0\in \Gamma} \rho_{ t_0}+\frac{\sqrt{N}\beta\gamma^L}{1-\norm{A}\gamma^L}.$$
	\end{enumerate}
\end{theorem}
\begin{proof}
See Appendix \ref{app_A}.
	\end{proof}

	{If   $t_0=1$, the bounds in Theorem~\ref{thm_stability}     directly depend on the initial condition. With the increase of $t_0$, the bounds become tighter.} The system designer with global system knowledge is able to examine condition \eqref{condition_thm} and calculate the error bounds in Theorem~\ref{thm_stability}.

In the following theorem, we   show that it is feasible to design   parameters $L>\frac{\ln \norm{A}}{\ln \gamma^{-1}}$, $\beta,\eta_0$ such that condition \eqref{condition_thm} is satisfied {under the case that the system is either marginally stable or unstable, i.e.,  $\norm{A}\geq 1$. } 
\begin{theorem}\label{thm_iff}
	(\textbf{Feasibility}) 
	It is feasible to find positive parameters $\beta,\eta_0$ and $L>\frac{\ln \norm{A}}{\ln \gamma^{-1}}$, and a scalar $\epsilon>0$ such that condition \eqref{condition_thm} holds for $\norm{A}\in [1,1+\epsilon)$,  
	if and only if
	\begin{align}\label{eq_iff}
	\lambda_0>s,
	\end{align}
	 where  $s$ and $\lambda_0$  are given in   \eqref{attack_input}  and \eqref{eq_lambda}, respectively.
\end{theorem}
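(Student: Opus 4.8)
The plan is to test the condition \eqref{condition_thm} first exactly at $\norm{A}=1$ and then propagate it to a right-neighborhood by continuity, and to read off the threshold $\lambda_0>s$ by comparing the growth rates of the two sides in the parameter $\beta$. The starting point is to rewrite $1-F(\eta_0)$ using \eqref{notation}: since $F(\eta_0)=\norm{A}\bigl(1-\tfrac{k^*(\eta_0)}{N}\lambda_0\bigr)$, we have $1-F(\eta_0)=(1-\norm{A})+\norm{A}\tfrac{\lambda_0}{N}k^*(\eta_0)$, so that \eqref{condition_thm} reads
\[
\eta_0(1-\norm{A})+\norm{A}\tfrac{\lambda_0}{N}\,\eta_0 k^*(\eta_0)\ \ge\ q_0 .
\]
At $\norm{A}=1$ this collapses to $\tfrac{\lambda_0}{N}\eta_0 k^*(\eta_0)\ge q_0$. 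The single elementary fact driving everything is the uniform bound $\eta_0 k^*(\eta_0)\le \beta/\norm{A}$, which follows directly from the definition of $k^*$ in \eqref{notation} (in the saturated branch $\eta_0 k^*(\eta_0)=\tfrac{\beta\eta_0}{\norm{A}(p_0+\eta_0)+b_w+b_v}<\beta/\norm{A}$, and in the unsaturated branch $k^*=1$ forces $\norm{A}\eta_0\le\beta$), together with the fact that $\eta_0 k^*(\eta_0)\to\beta/\norm{A}$ as $\eta_0\to\infty$ for fixed $\beta,L,\norm{A}$.

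For necessity, suppose parameters $\beta,\eta_0>0$, $L$ and some $\epsilon>0$ render \eqref{condition_thm} valid on all of $[1,1+\epsilon)$. Pick any $\norm{A}\in(1,1+\epsilon)$. Using $\eta_0 k^*(\eta_0)\le\beta/\norm{A}$ in the displayed inequality and the strict negativity of $\eta_0(1-\norm{A})$ gives $q_0<\tfrac{\lambda_0}{N}\beta$. On the other hand, reading off $q_0$ from \eqref{notation} and discarding its nonnegative terms yields $q_0\ge \tfrac{s\beta}{N}$. Chaining these, $\tfrac{s\beta}{N}\le q_0<\tfrac{\lambda_0}{N}\beta$, and cancelling $\beta>0$ gives $\lambda_0>s$, which is \eqref{eq_iff}.

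For sufficiency, assume $\lambda_0>s$ and choose the parameters in the order $L$, then $\beta$, then $\eta_0$. Writing $q_0=c_1+\beta\theta$ with the $\beta$-independent constant $c_1=\tfrac{N-s}{N}(b_w+b_v)+b_w$ and slope $\theta=\tfrac{1}{N}\bigl((N-s)\mu+s\bigr)$, $\mu=\tfrac{\sqrt N\,\gamma^L}{1-\gamma^L}$ (all at $\norm{A}=1$), I first use $\gamma\in[0,1)$ (guaranteed by $\lambda_2(\mathcal L)>0$ via Proposition \ref{thm_graph}) to pick $L$ so large that $(N-s)\mu<\lambda_0-s$; this makes the limiting slope $\tfrac{\lambda_0}{N}-\theta$ of the $\norm{A}=1$ inequality strictly positive, and simultaneously ensures $L>\tfrac{\ln\norm{A}}{\ln\gamma^{-1}}$ at $\norm{A}=1$. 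With $L$ fixed I then take $\beta$ large enough that $\tfrac{\lambda_0}{N}\beta>q_0$ strictly, which is possible precisely because the slope is positive while $c_1$ is constant. Finally, since $\tfrac{\lambda_0}{N}\eta_0 k^*(\eta_0)\uparrow\tfrac{\lambda_0}{N}\beta>q_0$ as $\eta_0\to\infty$, a large enough $\eta_0$ makes the strict form of \eqref{condition_thm} hold at $\norm{A}=1$. The left side $\eta_0(1-F(\eta_0))-q_0$ is continuous in $\norm{A}$ on a neighborhood of $1$ (all of $p_0,k^*,F,q_0$ depend continuously on $\norm{A}$ as long as $\norm{A}\gamma^L<1$), so a sufficiently small $\epsilon>0$ keeps it positive, and keeps both $\norm{A}\gamma^L<1$ and $L>\tfrac{\ln\norm{A}}{\ln\gamma^{-1}}$, on $[1,1+\epsilon)$, establishing feasibility.

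The delicate point, and the main obstacle, is that one cannot simply ``crank up'' a single parameter: $q_0$ itself grows linearly in $\beta$ through both the direct term $s\beta/N$ and the consensus-residual $p_0\propto\beta$, so increasing $\beta$ inflates the right-hand side as well. Feasibility therefore hinges on a slope comparison, and the threshold $\lambda_0>s$ is exactly the statement that the attainable growth rate $\lambda_0\beta/N$ of the left-hand side can be made to dominate the $\beta$-rate of $q_0$ once the graph-dependent term $p_0$ (hence $\mu$) is suppressed by driving $L$ large. Care is also needed to keep $F\in[0,1]$ and the limit $\eta_0 k^*\to\beta/\norm{A}$ valid uniformly on the small interval $[1,1+\epsilon)$; I expect these to be routine once the strict inequality at $\norm{A}=1$ is secured.
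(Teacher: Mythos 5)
Your proposal is correct, and it proves the equivalence by a route that differs in execution from the paper's, even though both rest on the same two pillars: the bound $\eta_0 k^*(\eta_0)\le \beta/\norm{A}$ (approached with equality as $\eta_0\to\infty$) and the suppression of the consensus residual $p_0$ by taking $L$ large, so that everything reduces to comparing the attainable rate $\lambda_0\beta/N$ on the left of \eqref{condition_thm} with the rate $s\beta/N$ sitting inside $q_0$. The differences are worth noting. For necessity, the paper argues by contradiction: assuming $\lambda_0\le s$, it reformulates \eqref{condition_thm} as $\vartheta_0 m_0\ge\norm{A}$ and shows this forces $k^*_0\eta_0/\beta>1$, which is impossible when $\norm{A}\ge 1$; you instead derive the chain $\tfrac{s\beta}{N}\le q_0<\tfrac{\lambda_0}{N}\beta$ directly from the condition at any single $\norm{A}>1$ in the interval, which is logically cleaner and sidesteps the confusing sign discussion around \eqref{eq_a2} in the paper's text. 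For sufficiency, the paper produces an explicit $\epsilon=\tfrac{\lambda_0-s}{4(N-\lambda_0)}$ (with a separate case for $s=0$) and an explicit window for $\beta$ that forces $k^*_0=1$ uniformly over the whole interval $[1,1+\epsilon)$; you establish strict feasibility only at $\norm{A}=1$ via the slope comparison $\tfrac{\lambda_0}{N}>\theta$, and then invoke continuity of $\eta_0(1-F(\eta_0))-q_0$ in $\norm{A}$ (valid while $\norm{A}\gamma^L<1$) to obtain some $\epsilon>0$. Your version is shorter, treats $s=0$ and $s>0$ uniformly, and makes the role of the threshold $\lambda_0>s$ transparent as a rate comparison; its only cost is that $\epsilon$ is non-constructive, whereas the paper's explicit $\epsilon$ quantifies how much instability $\norm{A}-1$ the filter provably tolerates as a function of $\lambda_0-s$ and $N$. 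Both arguments are sound.
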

\begin{proof}
See  Appendix \ref{app_thm_iff}.
	\end{proof}

{ 	
	Condition \eqref{eq_iff}    means that   the maximal number of attacked sensors at each time, i.e., $s$, is less than   scalar $\lambda_0$ which depends on the observation matrices of  attack-free sensors as shown in \eqref{eq_lambda}.
   Given $s$, it is straightforward to check \eqref{eq_iff} with the knowledge of the system observation matrices $\{C_i\}_{i=1}^N.$ 
	 For a particular system with $C_i=1$, $i=1,\dots,N,$  we have $	\lambda_0=N-s$.   Hence,   \eqref{eq_iff} is equivalent to $s\leq \lceil{ N/2\rceil}-1$, which is the same maximum obtained  
	under FDI  sensor attacks in  \cite{chong2015observability,fawzi2014secure}.  }

In the following theorem, by adding another  condition, we show   all the observations of attack-free sensors will eventually not be saturated, which contributes to   tighter bounds than those in Theorem~\ref{thm_stability}.
	\begin{theorem}\label{thm_normal}
(\textbf{Bounds}) Under the same conditions as in Theorem~\ref{thm_stability},   if there is a time $t_0\in\Gamma$ (e.g., $t_0=1$), such that 
\begin{align}\label{extra_condition}
\rho_{t_0}+\sup_{t\geq t_0}p(t)< \frac{\beta-b_w-b_v}{\norm{A}}
\end{align} 
  the following results hold:
\begin{enumerate}
	\item All the observations of attack-free sensors will eventually not be saturated, i.e., $k_i(t)=1$, $\forall i\in\mathcal{A}^c(t)$, $\forall t> t_0$;
	\vskip 5pt
	
	\item Compared to 1) of Theorem~\ref{thm_stability}, a tighter upper bound of the estimation error is ensured, i.e.,   $		\norm{e_{i}(t)}\leq R(\varpi,t)+p(t),\forall t>t_0$;
		\vskip 5pt
			\item  Compared to 3) of Theorem~\ref{thm_stability}, a tighter asymptotic upper bound of the    estimation error is ensured, i.e., 	$	\limsup\limits_{t\rightarrow \infty}\norm{e_{i}(t)}\leq \frac{q_0}{1-\varpi}+\frac{\sqrt{N}\beta\gamma^L}{1-\norm{A}\gamma^L}<\infty,$
\end{enumerate}
where $\rho_{t}$ is in \eqref{sequence}, $	p(t)$ and $ R(\cdot,t)$ are given in  \eqref{eq_p_R}, 
  and 
\begin{align}
\varpi&=\max\limits_{\mathcal{M}\subset\{1,2,\dots,N\}:|\mathcal{M}|=N-s}\norm{\left(I_n-\frac{1}{N}\sum_{i\in\mathcal{M}}C_{i}^{\sf T}C_{i}\right)A}.\nonumber
\end{align}
	\end{theorem}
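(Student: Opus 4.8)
The plan is to prove the three parts in order, using the uniform bound of Theorem \ref{thm_stability} 2) as the essential input for part 1), and then exploiting the fact that the attack-free sensors run at full gain to replace the state-dependent factor $F(\rho_t)$ by the constant contraction factor $\varpi$ in parts 2) and 3).

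For part 1), I would first rewrite the innovation of an attack-free sensor $i\in\mathcal{A}^c(t)$. Since $a_i(t)=0$ and $x(t)=Ax(t-1)+w(t-1)$, direct substitution gives
\[
y_i(t)-C_iA\hat x_i(t-1)=-C_iAe_i(t-1)+C_iw(t-1)+v_i(t),
\]
so that, using $\norm{C_i}=1$ and Assumption \ref{ass_bounds},
\[
|y_i(t)-C_iA\hat x_i(t-1)|\le \norm{A}\,\norm{e_i(t-1)}+b_w+b_v.
\]
Then I would invoke Theorem \ref{thm_stability} 2) to bound $\norm{e_i(t-1)}\le \rho_{t_0}+\sup_{t\ge t_0}p(t)$ whenever $t-1\ge t_0$, and substitute the extra hypothesis \eqref{extra_condition}. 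This forces $|y_i(t)-C_iA\hat x_i(t-1)|<\beta$ for every $t>t_0$, which by the gain rule \eqref{eq_K} is exactly $k_i(t)=1$.

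For parts 2) and 3), the key structural fact is that the consensus matrix $I-\alpha\mathcal{L}$ is doubly stochastic, so the network average $\bar x(t)=\frac{1}{N}\sum_i\hat x_i(t)=\frac{1}{N}\sum_i\tilde x_i(t)$ is preserved by the consensus step; I would therefore track the average error $\bar e(t)=\bar x(t)-x(t)$. Decomposing each post-consensus error as $e_i(t-1)=\bar e(t-1)+(\hat x_i(t-1)-\bar x(t-1))$, substituting the observation update, and splitting the sensor sum into $\mathcal{A}^c(t)$ (full gain, by part 1)) and $\mathcal{A}(t)$ (saturated gain), the full-gain terms combine with the prediction term $A\bar e(t-1)$ into $\big(I_n-\frac{1}{N}\sum_{i\in\mathcal{A}^c(t)}C_i^TC_i\big)A\,\bar e(t-1)$. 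Since $|\mathcal{A}^c(t)|\ge N-s$ and adjoining further positive semidefinite terms $C_i^TC_i$ only shrinks the residual matrix, the operator norm of this term is at most $\varpi$. All remaining terms are constants: each saturated attacked sensor contributes at most $\beta/N$ in norm (total $s\beta/N$), the process and measurement noises enter through $b_w,b_v$, and the consensus deviations $\norm{\hat x_i(t-1)-\bar x(t-1)}$ are dominated via Lemma \ref{prop_consensus} by the quantity $p_0$; collecting them (at the worst case $|\mathcal{A}(t)|=s$, which \eqref{extra_condition} makes dominant) reproduces exactly $q_0$ from \eqref{notation}. This yields the scalar recursion $\norm{\bar e(t)}\le \varpi\,\norm{\bar e(t-1)}+q_0$, which, initialized with $\norm{\bar e(t_0)}\le\rho_{t_0}$ and unrolled, gives $\norm{\bar e(t)}\le R(\varpi,t)$. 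Adding $\norm{\hat x_i(t)-\bar x(t)}\le p(t)$ proves part 2), and letting $t\to\infty$ (so $\varpi^{t-t_0}\to0$) together with $\limsup_{t\to\infty}p(t)=\frac{\sqrt N\beta\gamma^L}{1-\norm{A}\gamma^L}$ proves part 3).

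The main obstacle I anticipate is the constant-collection step: one must check that the time-varying bookkeeping collapses to the single constant $q_0$, which relies on the extra condition \eqref{extra_condition} to guarantee that the configuration $|\mathcal{A}(t)|=s$ is the worst case and on Lemma \ref{prop_consensus} to bound the consensus deviations uniformly; this is delicate because the errors feeding each update are the post-consensus errors of the previous step, so the average/deviation splitting must be propagated consistently over time. A second point to settle is that $\varpi<1$, so that $q_0/(1-\varpi)$ is finite and positive; this follows by the comparison $\varpi\le\norm{I_n-\frac{1}{N}\sum_{i\in\mathcal{M}}C_i^TC_i}\,\norm{A}\le(1-\lambda_0/N)\norm{A}$, the right-hand side being the value of $F$ at unsaturated gain, which condition \eqref{condition_thm} already forces below one.
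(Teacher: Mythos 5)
Your proposal is correct and takes essentially the same route as the paper: part 1) is the identical argument (uniform bound from Theorem \ref{thm_stability}-2) plus \eqref{extra_condition} forces the innovation below $\beta$), and parts 2)--3) rest on the same average-error recursion $\norm{\tilde e(t)}\leq \varpi\norm{\tilde e(t-1)}+q_0$ that the paper obtains by reusing \eqref{eq_error_track_simple} with $k_i(t)=1$ on $\mathcal{A}^c(t)$, which you simply re-derive from scratch (and, to your credit, you handle the constant-collection step — that \eqref{extra_condition} makes $\beta\geq b_w+b_v+\norm{A}p_0$ so that $|\mathcal{A}(t)|=s$ is the worst case — more explicitly than the paper does, and you verify $\varpi<1$ via $\varpi\leq(1-\lambda_0/N)\norm{A}\leq F(\rho_{t_0})<1$, matching the paper's remark). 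The one item you omit is the paper's final step for part 3): it shows the new asymptotic bound is genuinely tighter by proving $\frac{q_0}{1-\varpi}\leq\inf_{t_0\in\Gamma}\rho_{t_0}$ from the recursion \eqref{sequence}, whereas your limit argument establishes only the inequality itself; this comparison does follow quickly from $\rho_{t_0}\geq\frac{q_0}{1-F(\rho_{t_0})}$ for $t_0\in\Gamma$ together with your observation $\varpi\leq F(\rho_{t_0})$, so the gap is minor.
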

\begin{proof}
See  Appendix \ref{app_thm_normal}.
\end{proof}

The main idea to design  $\beta$ is to minimize two asymptotic upper bounds in Theorems~\ref{thm_stability} and \ref{thm_normal} w.r.t. $\beta$ under the constraints in \eqref{condition_thm} and \eqref{extra_condition}, respectively. 
Since in 3) of Theorem~\ref{thm_stability}, $\inf_{t_0\in \Gamma} \rho_{ t_0}$ is not analytical w.r.t. $\beta,$ we may choose an upper bound of $\inf_{t_0\in \Gamma} \rho_{ t_0}$, e.g., $\lim_{t\rightarrow \infty}R(F(\eta_0),t)$, as the optimization loss function w.r.t. $\beta$.
Note that the two optimization problems for the two cases in Theorems~\ref{thm_stability}~and~\ref{thm_normal}  are  non-convex, where some heuristic optimization methods can be utilized.

Algorithm resilience  is on the relationship between the number of the attacked sensors and the estimation performance. 
Recall that $s$ is an upper bound of the attacked sensor number given in Assumption~\ref{ass_attacker}, then we study the relationship between $s$ and an upper bound of the estimation error  in the following proposition.
\begin{proposition}\label{prop_resi}
	Under the same conditions as in Theorem~\ref{thm_stability}, for each sensor $i\in\mathcal{V}$, the estimation error is asymptotically upper bounded, i.e.,
		\begin{align}\label{eq_bound_mono}
		\limsup\limits_{t\rightarrow \infty}\norm{e_{i}(t)}\leq f(s),
		\end{align}
where $f(s)=\frac{\bar q_0}{1-F(\eta_0)}+\frac{\sqrt{N}\beta\gamma^L}{1-\norm{A}\gamma^L}$	
 is a monotonically non-decreasing function w.r.t. $s$, in which
$\bar q_0=b_{w}+\max\{\beta,b_{w}+b_{v}+\norm{A}p_0\}$ if $\norm{A}<1$, otherwise $\bar q_0=q_0$.
\end{proposition}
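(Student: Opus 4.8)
The plan is to prove the two assertions separately: the asymptotic bound $\limsup_{t\to\infty}\norm{e_i(t)}\le f(s)$, which is essentially a repackaging of Theorem~\ref{thm_stability}, and the monotonicity of $f$ in $s$, where the real work lies. Throughout I write $\mu:=b_w+b_v+\norm{A}p_0$, so that $q_0=\frac{N-s}{N}\mu+b_w+\frac{s\beta}{N}$ by \eqref{notation}. For the bound I would start from part~1) of Theorem~\ref{thm_stability} at $t_0=1$ (recall $1\in\Gamma$): $\norm{e_i(t)}\le R(F(\rho_1),t)+p(t)$ with $R(x,t)=x^{t-1}\rho_1+q_0\frac{1-x^{t-1}}{1-x}$. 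Since $1\in\Gamma$ gives $\rho_1\le\rho_0=\eta_0$ and $F(\cdot)$ is monotonically non-decreasing, we have $F(\rho_1)\le F(\eta_0)<1$, the strict inequality being the positivity of $1-F(\eta_0)$ guaranteed by \eqref{condition_thm}. Letting $t\to\infty$, with $R(F(\rho_1),t)\to\frac{q_0}{1-F(\rho_1)}$ and $\limsup_t p(t)=p_0$, gives
\[
\limsup_{t\to\infty}\norm{e_i(t)}\le\frac{q_0}{1-F(\rho_1)}+p_0\le\frac{q_0}{1-F(\eta_0)}+p_0 .
\]
For $\norm{A}\ge1$ this is exactly $f(s)$, since there $\bar q_0=q_0$. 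For $\norm{A}<1$ I would further note that the $s$-dependent part of $q_0$ is the convex combination $\frac{N-s}{N}\mu+\frac{s}{N}\beta\le\max\{\mu,\beta\}$, so $q_0\le b_w+\max\{\beta,\mu\}=\bar q_0$; dividing by $1-F(\eta_0)>0$ yields $\limsup_t\norm{e_i(t)}\le f(s)$.

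For the monotonicity, the structural fact I would establish first is that $\lambda_0$ in \eqref{eq_lambda} is non-increasing in $s$: taking a size-$(N-s)$ minimizer $\mathcal J^\star$ and deleting one index to form $\mathcal J''$ of size $N-s-1$, the Loewner inequality $\sum_{i\in\mathcal J''}C_i^TC_i\preceq\sum_{i\in\mathcal J^\star}C_i^TC_i$ and monotonicity of $\lambda_{\min}$ give $\lambda_0(s+1)\le\lambda_{\min}\bigl(\sum_{i\in\mathcal J''}C_i^TC_i\bigr)\le\lambda_0(s)$. Because $k^*(\eta_0)$, $p_0$, $\eta_0$ and $\beta$ are all independent of $s$, it follows that $F(\eta_0)=\norm{A}\bigl(1-\frac{k^*(\eta_0)}{N}\lambda_0\bigr)$ is non-decreasing in $s$, hence $1-F(\eta_0)>0$ is non-increasing and $1/(1-F(\eta_0))$ is non-decreasing. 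Since $p_0$ is $s$-independent, the case $\norm{A}<1$ is then immediate: $\bar q_0$ is constant in $s$, so $f(s)$ is a non-negative constant times a non-decreasing factor plus a constant.

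The main obstacle is the case $\norm{A}\ge1$, where one must use the exact $\bar q_0=q_0$, affine in $s$ with slope $\frac{\beta-\mu}{N}$. If $\beta\ge\mu$ both $q_0$ and $1/(1-F(\eta_0))$ are non-decreasing and positive, so their product is non-decreasing and we are done. The delicate case is $\beta<\mu$: then $q_0$ \emph{decreases} while $1/(1-F(\eta_0))$ increases, and the two effects compete. Writing $u(s)=q_0$ and $w(s)=1-\norm{A}+\frac{\norm{A}k^*(\eta_0)}{N}\lambda_0(s)=1-F(\eta_0)$, monotonicity is equivalent to $u(s+1)w(s)-u(s)w(s+1)=\Delta_u\,w(s)-u(s)\,\Delta_w\ge0$, where $\Delta_u=\frac{\beta-\mu}{N}<0$ and $\Delta_w=\frac{\norm{A}k^*(\eta_0)}{N}\bigl(\lambda_0(s+1)-\lambda_0(s)\bigr)\le0$. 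Establishing this sign is the crux, and it amounts to showing the per-step loss of collective observability $\lambda_0(s)-\lambda_0(s+1)$ is large enough to dominate the shrinkage of $q_0$ — the quantitative form of ``one more compromised sensor must degrade the guaranteed accuracy.'' I expect this to be the hardest step: noting that $0\le\lambda_0(s)-\lambda_0(s+1)\le1$ by Weyl's inequality applied to the rank-one update $C_i^TC_i$ with $\norm{C_i}=1$, the difficulty is that the decrement can vanish for redundant sensors, so I would attack it by lower-bounding the decrement of $\lambda_0$ from the observability structure together with the feasibility margin $1-F(\eta_0)>0$ of \eqref{condition_thm}, flagging that without a strict observability decrease the inequality becomes tight.
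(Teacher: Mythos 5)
Your treatment of the asymptotic bound and of the cases $\norm{A}<1$ and ($\norm{A}\ge 1$, $\beta\ge\mu$) is correct and follows the same route as the paper: apply 1) of Theorem \ref{thm_stability} at $t_0=1$, let $t\to\infty$, use $F(\rho_1)\le F(\eta_0)<1$, and for $\norm{A}<1$ replace the $s$-dependent convex combination in $q_0$ by $\max\{\beta,\mu\}$. However, the case you correctly identify as the crux --- $\norm{A}\ge 1$ with $\beta<\mu=b_w+b_v+\norm{A}p_0$ --- is left open, and the direction you propose for closing it cannot work. As you yourself flag, the decrement $\lambda_0(s)-\lambda_0(s+1)$ can be exactly zero (redundant sensors), in which case your target inequality $\Delta_u\,w(s)-u(s)\,\Delta_w\ge 0$ reads $0\ge\frac{\mu-\beta}{N}\bigl(1-F(\eta_0)\bigr)>0$, which is false. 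So no lower bound on the eigenvalue decrement derived from the observability structure can rescue the argument; if the case $\beta<\mu$ could genuinely coexist with the hypotheses, the monotonicity claim would simply fail.

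The missing idea, which is how the paper closes this case, is that the case is \emph{vacuous}: when $\norm{A}\ge 1$, the feasibility condition \eqref{condition_thm} forces $\beta>b_w+b_v+\norm{A}p_0$. The argument is by contradiction. If $\beta\le\mu$, then first $q_0\ge\beta+b_w$ (replace $\mu$ by $\beta$ in the convex combination), and second $k^*(\eta_0)$ takes its saturated value $\frac{\beta}{\norm{A}(p_0+\eta_0)+b_w+b_v}$ because $\beta\le\mu\le\norm{A}(p_0+\eta_0)+b_w+b_v$. Rewriting \eqref{condition_thm} as $1-k^*(\eta_0)\frac{\lambda_0}{N}\le\frac{1}{\norm{A}}\bigl(1-\frac{q_0}{\eta_0}\bigr)$ and inserting these two facts yields
\begin{align*}
\frac{\beta}{\norm{A}(p_0+\eta_0)+b_{w}+b_{v}}\,\frac{\lambda_0}{N}\;\ge\;\frac{(\norm{A}-1)\eta_0+\beta+b_{w}}{\norm{A}\eta_0},
\end{align*}
whose left side is strictly below $\frac{\beta}{\norm{A}\eta_0}$ (using $\lambda_0\le N$ and $p_0>0$) while its right side is at least $\frac{\beta}{\norm{A}\eta_0}$ (using $\norm{A}\ge 1$). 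This contradiction shows $\beta>\mu$, so $q_0$ has nonnegative slope $\frac{\beta-\mu}{N}$ in $s$, and $f(s)$ is a product of two nonnegative non-decreasing factors plus a constant. In short: the feasibility condition you used only qualitatively (to get $1-F(\eta_0)>0$) must be exploited quantitatively to eliminate the troublesome parameter regime, rather than trying to win a competition between the two $s$-derivatives inside it.
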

\begin{proof}
	See Appendix \ref{pf_prop_resi}.
\end{proof}

\subsection{Connection with Sparse Observability}\label{subsec:observa}
{ The sparse observability  in the following  can be used in state estimation under FDI sensor attacks.}
\begin{definition}\label{def_sparse}
		The linear system defined by (\ref{eq_system})  is said to be $s$-sparse observable if for every set $\Gamma\subseteq \{1,\dots,N\}$ with $|\Gamma|=s$, the pair $(A,C_{\bar \Gamma})$ is observable, where $C_{\bar \Gamma}$ is the remaining matrix by removing $C_{j},j\in\Gamma$ from $[C_1^{\sf T},C_2^{\sf T},\dots,C_N^{\sf T}]^{\sf T}$. Furthermore, if the pair $C_{\bar \Gamma}^{\sf T}C_{\bar \Gamma}=\sum_{i=1,i\notin\Gamma}^{N}C_{i}^{\sf T}C_{i}\succ0$, the system is said to be one-step $s$-sparse observable.
\end{definition}

 In   centralized frameworks, if the observations of $s$ sensors are compromised, the system should be $2s$-sparse observable to guarantee the effective estimation of system state \cite{shoukry2016event}. 
The direct relationship between \eqref{eq_iff} and the one-step $s$-sparse observability is given in the following.
\begin{proposition}\cite{chen2019resilienttsp}\label{lem_iff}
	A necessary condition to guarantee $\lambda_0>s$ is that  system (\ref{eq_system}) is one-step $2s$-sparse observable. If the observation vectors are orthogonal,  one-step $2s$-sparse observability is also a sufficient condition to guarantee $\lambda_0>s$.
\end{proposition}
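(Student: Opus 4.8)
The plan is to prove both implications by tracking how the minimal eigenvalue of the Gram-type matrix $M_{\mathcal J} := \sum_{i\in\mathcal J} C_i^T C_i$ changes when sensors are removed, exploiting that each $C_i^T C_i$ is a rank-one positive semidefinite matrix with $\norm{C_i}=1$. For necessity (that $\lambda_0>s$ forces one-step $2s$-sparse observability) I would argue by contrapositive. Suppose the system is \emph{not} one-step $2s$-sparse observable, so there is a set $\Gamma$ with $|\Gamma|=2s$ and $\sum_{i\notin\Gamma} C_i^T C_i \not\succ 0$. Since this matrix is positive semidefinite, there is a unit vector $z$ with $z^T\bigl(\sum_{i\notin\Gamma} C_i^T C_i\bigr)z = \sum_{i\notin\Gamma}(C_i z)^2 = 0$, and hence $C_i z = 0$ for every $i\notin\Gamma$.

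Next I would split $\Gamma=\Gamma_1\cup\Gamma_2$ into two disjoint halves with $|\Gamma_1|=|\Gamma_2|=s$, and take $\mathcal J = \{1,\dots,N\}\setminus\Gamma_1$, which has cardinality $N-s$ and is therefore admissible in the minimization defining $\lambda_0$ in \eqref{eq_lambda}. Since $\mathcal J = \Gamma_2 \cup (\{1,\dots,N\}\setminus\Gamma)$ and $C_i z=0$ off $\Gamma$, it follows that
$$z^T M_{\mathcal J} z = \sum_{i\in\Gamma_2}(C_i z)^2 \le \sum_{i\in\Gamma_2}\norm{C_i}^2\norm{z}^2 = s,$$
so $\lambda_{\min}(M_{\mathcal J})\le s$ and thus $\lambda_0\le s$. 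Contrapositively, $\lambda_0>s$ implies one-step $2s$-sparse observability. Note that this direction uses only $\norm{C_i}=1$ and no orthogonality.

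For sufficiency under orthogonal observation vectors, I would first observe that orthogonality together with $\norm{C_i}=1$ means the distinct observation directions form an orthonormal family, so each $C_i^T C_i$ is the projection onto one of finitely many mutually orthogonal lines. Consequently $M_{\mathcal J}$ is diagonal in the basis of these directions, and $\lambda_{\min}(M_{\mathcal J})$ equals the smallest number of sensors in $\mathcal J$ aligned with any single direction (equal to $0$ if the directions fail to span $\mathbb{R}^n$, in which case both conditions fail trivially and equivalence is immediate). Fix any $\mathcal J$ with $|\mathcal J|=N-s$, set $d=\lambda_{\min}(M_{\mathcal J})$, and let $u$ be a direction attained by exactly $d$ sensors of $\mathcal J$. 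The $s$ removed sensors contribute at most $s$ further sensors along $u$, so the total count $c_u$ of sensors aligned with $u$ obeys $c_u\le d+s$. One-step $2s$-sparse observability forces $c_u>2s$: otherwise I could pick $\Gamma$ with $|\Gamma|=2s$ containing every sensor whose direction is parallel to $u$, leaving $u$ uncovered and contradicting $\sum_{i\notin\Gamma}C_i^T C_i\succ 0$. Combining the two inequalities gives $d+s\ge c_u>2s$, i.e. $d>s$; since $\mathcal J$ was arbitrary, $\lambda_0>s$.

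The main obstacle I anticipate is pinning down the exact meaning of ``orthogonal observation vectors'' and the precise eigenvalue–count correspondence it produces; once that structural fact is in place, the remainder is bookkeeping of set cardinalities. A secondary subtlety is correctly tracking the factor of two — removing $s$ sensors in the definition of $\lambda_0$ versus $2s$ in sparse observability — together with the strict-versus-nonstrict inequalities, which is precisely what converts the coverage requirement $c_u>2s$ into the clean bound $d>s$.
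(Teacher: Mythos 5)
Your proposal is correct, and it is worth noting that the paper itself offers no argument here at all: its ``proof'' of Proposition~\ref{lem_iff} is a one-line citation to an external reference. Your write-up is therefore a genuine, self-contained addition. The necessity direction is sound: the contrapositive, the choice of a unit vector $z$ annihilated by all $C_i$ with $i\notin\Gamma$, the split $\Gamma=\Gamma_1\cup\Gamma_2$ into two sets of size $s$, and the Cauchy--Schwarz bound $z^T M_{\mathcal{J}} z=\sum_{i\in\Gamma_2}(C_i z)^2\le s$ for $\mathcal{J}=\{1,\dots,N\}\setminus\Gamma_1$ together give $\lambda_0\le s$, exactly as needed, and indeed use only $\norm{C_i}=1$. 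The sufficiency direction is also sound once ``orthogonal observation vectors'' is read (as you do, and as the nontriviality of the statement forces, since $N$ pairwise-orthogonal distinct unit vectors would cap $N\le n$) as: the $C_i$ are drawn, up to sign, from an orthonormal family of directions, so that $M_{\mathcal{J}}$ is diagonal in that basis with eigenvalues equal to per-direction sensor counts; the counting chain $d+s\ge c_u>2s$ then cleanly yields $d>s$. One small caveat: your construction of the padded set $\Gamma$ of cardinality $2s$ implicitly requires $N\ge 2s$; when $N<2s$ one-step $2s$-sparse observability holds vacuously while $\lambda_0\le N-s<s$, so the sufficiency claim degenerates. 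This is a defect of the proposition's statement in an edge case outside the paper's operating regime (the paper notes $\lambda_0\in[0,N-s]$, so $\lambda_0>s$ already presumes $N>2s$), not of your argument, but it would be worth flagging the assumption $N>2s$ explicitly where you invoke the padding.
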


 If the observations of any $s$ sensors are compromised, based on the results of Theorems~\ref{thm_stability}, \ref{thm_iff} and Proposition \ref{lem_iff},  Algorithm~\ref{alg:A} is able to 
to achieve  effective estimation  if 
 system (\ref{eq_system}) is one-step $2s$-sparse observable.

\section{Secured distributed filter with attack detection}\label{sec:detector}
{In this section, we modify Algorithm~\ref{alg:A} by adding an attack detection scheme and then analyze the properties of the revised algorithm.}
In the sequel, we need  the following assumption, which is common in the literature \cite{pajic2017attack,fawzi2014secure,shoukry2017secure,han2019convex,ren2019secure,nakahira2018attack,shoukry2018smt,mitra2019byzantine,mitra2019resilient,an2019distributed}. 
{\begin{assumption}\label{ass_number}
		The attacked sensor set is known to be time-invariant, i.e., $\mathcal{A}(t)\equiv\mathcal{A}$, such that $|\mathcal{A}|\leq s$.
\end{assumption}}
Under Assumption~\ref{ass_number}, we denote $\mathcal{A}^c$ the complement of $\mathcal{A}$ in the sensor set $\mathcal{V}$, i.e., $\mathcal{A}^c\bigcup \mathcal{A}=\mathcal{V}$.

\subsection{Detection Based Distributed Filter}
{Denote   $\mathcal{I}_{i}(t)$ the index set of detected attacked sensors  known to sensor $i$ at time $t$. }
Moreover, we let $d_i(t)=|\mathcal{I}_{i}(t)|$. 
In order to improve   estimation performance, we aim to  isolate the observations of   sensors in the set $\mathcal{I}_{i}(t)$  in the following way:
 If   sensor $i$, $i\in\mathcal{V}$, is detected to be under attack at a time $T^*$, i.e., $i\in\mathcal{I}_{i}(T^*)$,   sensor $i$ will no longer use its observations for $t\geq T^*$.
We define  the   following sequence $\{\bar \rho_{t,i}\in\mathbb{R}\}$. 
For each $i\in\mathcal{V}$, $t=0,1,\dots,$ let 
\begin{align}\label{sequence3}
\bar\rho_{t+1,i}=\bar F(\bar\rho_{t,i},t)\bar\rho_{t,i}+\bar q_i(t),\qquad \bar\rho_{0,i}=\eta_0
\end{align}
where 
\begin{align*}
\begin{split}
\bar F(\bar\rho_{t,i},t)&=\norm{A}\left(1-\frac{k^*(\bar\rho_{t,i},t)}{N}\lambda_0\right),\\
k^*(\bar\rho_{t,i},t)&=\min\bigg\{1,\frac{\beta}{\norm{A}(p(t)+\bar\rho_{t,i})+b_{w}+b_{v}}\bigg\},\\
\bar q_i(t)&=q_0-\frac{d_i(t)\beta}{N},
\end{split}
\end{align*}
and $q_0$ is given in \eqref{notation}.
Based on     sequence $\{\bar \rho_{t,i}\in\mathbb{R}\}$,  we provide an attack detection condition in the following.

\textbf{Detection condition: }  
 Sensor $i \in\mathcal{V}$ is believed   under attack if  
\begin{align}\label{detector2}
|y_{i}(t)-C_{i}A\hat x_{i}(t-1)|>\bar\varphi_i(t),
\end{align}
where $\bar\varphi_i(t)=\norm{A}(\bar \rho_{t-1,i}+p(t-1))+b_{w}+b_{v}$, and $\bar \rho_{t-1,i}$ is given in \eqref{sequence3}.

Then, we propose 
a distributed  {saturation-based filter} with  detection in Algorithm~\ref{alg:B}, which is modified from Algorithm~\ref{alg:A} by employing the   detection condition  and the  observation isolation operation.

\begin{algorithm}[t]
	\caption{  Distributed  Saturation-Based Filter with  Detection}
	\label{alg:B}
	\begin{algorithmic}[1]
		\STATE {\textbf{Initial setting:} ($\hat x_{i}(0),\mathcal{I}_{i}(0),\alpha,\beta,L,s$)}\\		\vskip 2pt
						\FOR{$t=1,2,\dots$}
		\STATE {\textbf{Update with detection:}} Let $\mathcal{I}_{i}(t)=\mathcal{I}_{i}(t-1)$ \\ 		\vskip 2pt
		\IF{ $i\in\mathcal{I}_{i}(t)$}  
		\STATE{ $\tilde x_{i}(t)=A\hat x_{i}(t-1)$	} 
		\ELSIF{$|\mathcal{I}_{i}(t)|=:d_i(t)=s$} 
		 \STATE{$\tilde x_{i}(t)=A\hat x_{i}(t-1)
		 	+ C_i^{\sf T}(y_{i}(t)-C_iA\hat x_{i}(t-1))\nonumber$}  
		\ELSIF{$|y_{i}(t)-C_{i}A\hat x_{i}(t-1)|>\bar\varphi_i(t)$, where $\bar\varphi_i(t)$ is in \eqref{detector2}} 
		 \STATE{$\tilde x_{i}(t)=A\hat x_{i}(t-1)$,  $\text{let }\mathcal{I}_{i}(t)=\mathcal{I}_{i}(t)\cup \{i\}$}		
		\ELSE
		 \STATE{$k_{i}(t)=\min\bigg\{1,\frac{\beta}{|y_{i}(t)-C_{i}A\hat x_{i}(t-1)|}\bigg\}$\\
		 	$\tilde x_{i}(t)=A\hat x_{i}(t-1)
		 	+ k_{i}(t)C_i^{\sf T}(y_{i}(t)-C_iA\hat x_{i}(t-1))\nonumber$}
		\ENDIF
		\STATE {\textbf{Estimate consensus}:   $\hat x_{i,0}(t)=\tilde x_{i}(t), \mathcal{I}_{i,0}(t)=\mathcal{I}_{i}(t)$}\\		\vskip 2pt
			\FOR{$l=1,\dots,L$}
	\STATE{  Sensor $i$ obtains $\{\hat x_{j,l-1}(t),\mathcal{I}_{j,l-1}(t)\}$ from   sensor $j$,  \\		\vskip 2pt
		 $\hat x_{i,l}(t)=\hat x_{i,l-1}(t)-\alpha\sum_{j\in\mathcal{N}_{i}}(\hat x_{i,l-1}(t)-\hat x_{j,l-1}(t))$\\	
		\vskip 2pt
	 $\mathcal{I}_{i,l}(t)=\bigcup_{j\in\mathcal{N}_i}\mathcal{I}_{j,l-1}(t) \bigcup \mathcal{I}_{i,l-1}(t)$\\		\vskip 2pt
	}
\ENDFOR
		\STATE {Let $\hat x_{i}(t)=\hat x_{i,L}(t), \mathcal{I}_{i}(t)=\mathcal{I}_{i,L}(t).$ }
		\ENDFOR
	\end{algorithmic}
\end{algorithm}

{\begin{proposition}\label{lem_bound_normal}
	Consider Algorithm~\ref{alg:B} under the same setting as in Algorithm~\ref{alg:A}. 
			Under the same conditions as in  Theorem~\ref{thm_stability} and Assumption~\ref{ass_number}, the following results hold  for every sensor $i$ and every time $t\geq 1$:
		\begin{enumerate}
			\item 	The observation innovation of each attack-free sensor is upper bounded, i.e.,
			\begin{align}\label{eq_normal2}
			|y_{i}(t)-C_{i}A\hat x_{i}(t-1)|\leq \bar\varphi_i(t), i\in\mathcal{A}^c; 
			\end{align}
			\item  
			The sensors in  set $\mathcal{I}_{i}(t)$ are under attack for sure, i.e., $\mathcal{I}_{i}(t)\subseteq \mathcal{A}$; 
			\item   $\bar\varphi_i(t)$ is monotonically decreasing w.r.t. the number of the detected sensors (i.e., $d_i(t-1)$).
		\end{enumerate}
\end{proposition}}
\begin{proof}
	We  provide the proof idea as follows. To prove \eqref{eq_normal2}, we just need to show $\bar\rho_{t-1,i}\geq \norm{\tilde e(t-1)}$, where $\tilde e(t-1)=\|\frac{1}{N}\sum_{i=1}^N\hat x_{i}(t-1)-x(t-1)\|.$ This is done by referring to \eqref{eq_error_track_simple}--\eqref{eq_transition} and by noting that the  number of  attacked but undetected sensors   is upper bounded by $s-d_i(t-1)$. The conclusion 2)  follows from 1). The conclusion	3) is satisfied, since $\bar\rho_{t-1,i}$ in \eqref{sequence3} is monotonically decreasing w.r.t. $d_i(t-1)$.
\end{proof}

{
	
	In our framework, an attack signal $a_i(t)$ is stealthy if the compromised observation $y_{i}(t)$ violates   detection condition \eqref{detector2}, i.e., a stealthy attack signal $a_i(t)$ is in   set $\{a_i(t)\in\mathbb{R}|\hspace{2 pt} |\xi_i(t)+a_i(t)|\leq\bar\varphi_i(t)\}$, where   $\xi_i(t)$ is the attack-free observation innovation, i.e., $\xi_i(t)=C_iA(x(t-1)-\hat x(t-1))+C_iw(t-1)+v(t)$. Since this paper considers   bounded noise processes, a single large attack signal (larger than $\bar\varphi_i(t)$) will expose the attacked sensor for sure. In other words,	if there is a time $t$ such that $a_i(t) \in \{a_i(t)\in\mathbb{R}|\hspace{2 pt}|\xi_i(t)+a_i(t)|>\bar\varphi_i(t)\}$,    this attacked sensor $i$ will  be detected. Thus, this detection scheme  differs from the methods based on constructing statistical variables for  hypothesis tests  on the  innovation distributions (e.g., \cite{guo2017optimal}).  In our framework, the knowledge of the attacker on the designed detector especially on   threshold $\bar\varphi_i(t)$ will largely influence the detected sensor number.

	}

\subsection{Error Bounds and Convergence}

 Denote   $d(t)$ the maximal number of detected sensors at time $t$, i.e., $d(t)=\max_{i\in\mathcal{V}}\{d_i(t)\}$, with     $d(0)=0$.  Given any $T\geq 0,$ for $\forall t\geq T$, we construct the 
 following sequence $\{\bar \rho_{t}\in\mathbb{R}|\bar \rho_{t},t\geq T\}$
 \begin{align}\label{sequence2}
 \bar \rho_{t+1}=F(\bar \rho_{t})\bar \rho_{t}+\bar q_0,\qquad \bar\rho_T=\rho_T,
 \end{align}
 where $\bar q_0=q_0-\frac{d(T)\beta}{N}$, $\rho_T$ and $F(\cdot)$ are given in \eqref{sequence} and \eqref{notation}, respectively. Then, the following theorem builds the connection between $d(T)$ and an upper bound of the estimation error of Algorithm~\ref{alg:B}.

\begin{theorem}\label{thm_bounds_detected}
		Consider Algorithm~\ref{alg:B} under the same setting as in Algorithm~\ref{alg:A}. 
	Under the same conditions as in  Theorem~\ref{thm_stability} and Assumption~\ref{ass_number}, the estimation error     is asymptotically upper bounded, i.e., 
\begin{align*}
	&\limsup\limits_{t\rightarrow \infty}\norm{e_{i}(t)}\leq W(T), \forall T\geq 0,
\end{align*}
where 
\begin{align*}
W(T)&=\inf\limits_{t_0\in \Gamma} \rho_{ t_0}+\frac{\sqrt{N}\beta\gamma^L}{1-\norm{A}\gamma^L}-\frac{d(T)\beta}{N(1-F_{*})},
\end{align*}
and  $F_{*}=\inf\limits_{t_0\in \bar\Gamma}F( \bar\rho_{ t_0})\in [0,1)$, $\bar \Gamma=\{t\geq T|\bar\rho_{t}\leq \bar\rho_{t-1}\}$. 
\end{theorem}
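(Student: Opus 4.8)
The plan is to combine the average-versus-disagreement error decomposition already used for Theorem \ref{thm_stability} with the reduced driving term produced by the detector and isolation mechanism, and then to extract the explicit improvement by directly comparing the two scalar recursions $\{\rho_t\}$ and $\{\bar\rho_t\}$ on $t\ge T$. First I would reduce the per-sensor error to the global average error: writing $\tilde e(t)=\frac1N\sum_{j=1}^N\hat x_j(t)-x(t)$ and using the triangle inequality, $\|e_i(t)\|\le\|\tilde e(t)\|+\|\hat x_i(t)-\frac1N\sum_j\hat x_j(t)\|$, where the disagreement term is controlled by the consensus bound $p(t)$ of Lemma \ref{prop_consensus} with $\limsup_{t\to\infty}p(t)\le\frac{\sqrt N\beta\gamma^L}{1-\|A\|\gamma^L}$. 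Hence it suffices to bound $\limsup_{t\to\infty}\|\tilde e(t)\|$.

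Second, following the error-dynamics bookkeeping behind Proposition \ref{lem_bound_normal} (i.e. \eqref{eq_error_track_simple}--\eqref{eq_transition}), I would argue that for $t\ge T$ the number of attacked sensors whose observations still feed the average is at most $s-d(T)$: detections are correct by Lemma \ref{lem_det1}, the isolation rule discards a detected sensor's observation, the detected set is monotone so $d(t)\ge d(T)$, and detections propagate across the connected network through the consensus of $\mathcal I_{i,l}(t)$. Each still-active attacked sensor contributes at most $\beta/N$ through the saturation gain, so the average error obeys the recursion \eqref{sequence2} with driving term $\bar q_0=q_0-\frac{d(T)\beta}{N}$ while the correction gain still sees at least $N-s$ trusted innovations, leaving $\lambda_0$ and hence $F(\cdot)$ unchanged; this yields $\|\tilde e(t)\|\le\bar\rho_t$ for all $t\ge T$. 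This is the step I expect to be the main obstacle, since it requires careful accounting of the isolation and propagation mechanism to certify exactly the reduction $d(T)\beta/N$ in the constant.

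Finally, I would compare $\{\bar\rho_t\}$ with the original sequence $\{\rho_t\}$. Because $g(x)=F(x)x$ is non-decreasing (a product of the non-decreasing $F\in[0,1]$ with $x\ge0$), $\bar\rho_T=\rho_T$, and $\bar q_0\le q_0$, induction gives $0\le\bar\rho_t\le\rho_t$. Setting $\delta_t=\rho_t-\bar\rho_t$ and using $F(\rho_t)\rho_t-F(\bar\rho_t)\bar\rho_t\ge F(\bar\rho_t)\delta_t$ yields the affine inequality $\delta_{t+1}\ge F(\bar\rho_t)\delta_t+\frac{d(T)\beta}{N}$ with $\delta_T=0$. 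Iterating this with the asymptotic factor $F_*=\inf_{t_0\in\bar\Gamma}F(\bar\rho_{t_0})$, which equals $F$ evaluated at the limiting value of $\bar\rho_t$ by monotonicity and continuity of $F$, gives $\liminf_{t\to\infty}\delta_t\ge\frac{d(T)\beta}{N(1-F_*)}$. Combining conclusion 3) of Lemma \ref{lem_stability} for $\{\rho_t\}$, namely $\limsup_t\rho_t\le\inf_{t_0\in\Gamma}\rho_{t_0}$, with the elementary bound $\limsup_t(\rho_t-\delta_t)\le\limsup_t\rho_t-\liminf_t\delta_t$ gives $\limsup_t\|\tilde e(t)\|\le\inf_{t_0\in\Gamma}\rho_{t_0}-\frac{d(T)\beta}{N(1-F_*)}$, and adding the consensus term $\frac{\sqrt N\beta\gamma^L}{1-\|A\|\gamma^L}$ produces exactly $W(T)$.

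Along the way I would confirm $\bar\Gamma$ is non-empty so that $F_*$ is well defined and lies in $[0,1)$: at any descent point $t_0\in\Gamma$ with $t_0\ge T$ one has $\rho_{t_0}(1-F(\rho_{t_0}))\ge q_0>\bar q_0$, which together with $\bar\rho_{t_0}\le\rho_{t_0}$ forces $\bar\rho_{t_0+1}<\bar\rho_{t_0}$, so $t_0+1\in\bar\Gamma$. The finiteness $F_*<1$ then follows from the same feasibility condition \eqref{condition_thm} that underlies Theorem \ref{thm_stability}.
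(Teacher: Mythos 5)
Your proposal is correct and follows essentially the same route as the paper's proof: the same decomposition $e_i=\tilde e+\bar e_i$ with the consensus bound, the same modified scalar sequence $\{\bar\rho_t\}$ with reduced input $\bar q_0=q_0-\frac{d(T)\beta}{N}$ bounding $\norm{\tilde e(t)}$, and the same comparison against $\{\rho_t\}$ to extract the deficit $\frac{d(T)\beta}{N}\cdot\frac{1-F_*^{t-T}}{1-F_*}$ before invoking Lemma \ref{lem_stability}. Your formulation via the difference $\delta_t=\rho_t-\bar\rho_t$ and an affine recursion is just an algebraically equivalent packaging of the paper's direct unrolling, and your extra care on the accounting step (at most $s-d(T)$ undetected attacked sensors feeding the average) fills in what the paper dismisses as "straightforward."
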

\begin{proof}
See  Appendix \ref{pf_thm_bounds_detected}.
\end{proof}

{Algorithm~\ref{alg:B} ensures that   detected sensor number  $d(T)$ is non-decreasing as $T$ increases.
	As more sensors are detected (i.e., $d(T)$ is increasing), we get a tighter bound in  Theorem~\ref{thm_bounds_detected}. Thus, the bound in  Theorem~\ref{thm_bounds_detected} is  equal to or smaller than the bound in 3) of Theorem~\ref{thm_stability}. In practice, the system defender can use $d(T)$ at different $T$ to	generate a sequence of non-increasing bounds, which can be used to estimate the asymptotic error bound.  }
In the following theorem, we provide the conditions such that the state estimate of Algorithm~\ref{alg:B}  converges to the system state asymptotically.
\begin{theorem}\label{thm_observer}
(\textbf{Convergence})		Consider Algorithm~\ref{alg:B} under the same setting as in Algorithm~\ref{alg:A}. 	Under the same conditions as in  Theorem~\ref{thm_stability}, Assumption~\ref{ass_number}, and the following conditions 
	\begin{enumerate}
				\item the system is   noise-free, i.e., $w(t)\equiv 0$ and $v_i(t)\equiv 0$ for any $i\in\mathcal{V}$;
		\item {the attacker compromises $s$ sensors and they are detected in  finite time, i.e., there exists a finite time $\hat t_0$ and an $i$ such that $d_i(\hat t_0)=s$}; 
		\item   matrix $G$ is Schur stable,
	\end{enumerate}
 then the estimate in Algorithm~\ref{alg:B} will asymptotically converge to the state, i.e., 
		\begin{align*}
		\lim\limits_{t\rightarrow \infty}\norm{\hat x_i(t)-x(t)}=0, i\in\mathcal{V},
		\end{align*}
		where 
		\begin{align*} 
		\begin{split}
		G&=\begin{pmatrix}
		2\norm{A}\gamma^L&\norm{A}\gamma^L\sqrt{N-s}\\
			\tau_0&	\varpi
		\end{pmatrix}\\
\tau_0&=\max_{\mathcal{M}\subset\mathcal{V},|\mathcal{M}|=N-s}\norm{\frac{1}{N}(\textbf{1}_N^{\sf T}\otimes  I_n)\bar C^{\sf T}\bar K_{\mathcal{M}}\bar C(I_N\otimes A)}\\
\bar K_{\mathcal{M}}&=\diag\{\mathbb I_{1\in \mathcal{M}},\mathbb I_{2\in \mathcal{M}},\dots,\mathbb I_{N\in \mathcal{M}}\}\in\mathbb{R}^{N\times N},
		\end{split}
		\end{align*}
		where $\varpi$ is given in Theorem~\ref{thm_normal}, and $\bar C=\diag\{C_1,\dots,C_N\}$.
\end{theorem}
\begin{proof}
See  Appendix \ref{pf_thm_observer}.
\end{proof}

  {Condition 2) can be satisfied when the attacker compromises $s$ sensors  without using persistently  stealthy attack signals. In this case, the detector is able to identify all attacked sensors in finite time and remove their influence. Otherwise, the attacker can   have influence to the estimation  such that the estimation error is not tending to zero, 
  	but the estimation error is still bounded as we  state in Theorems~\ref{thm_stability}  and \ref{thm_bounds_detected}.
  	It is worth  noting that condition 2) can be  monitored by each sensor to know whether the   number of  detected sensors reaches $s$.}
  	Condition~3) can be fulfilled if the communication rate $L$ is relatively large, the number of the attack-free sensors (i.e., $N-s$) is relatively large, or the norm of the transition matrix (i.e., $\norm{A}$) is small.

\section{Simulation Results}\label{sec_simu}
In this section, we  provide   numerical simulations to show the effectiveness of the developed results.

{
	Consider  a second-order system monitored by a sensor network with 30 nodes, where   $A=\left(\begin{smallmatrix}
	1 & 0.1\\
	0 &1
	\end{smallmatrix}\right)$,   $C_i=(0,1)$ for sensor $i\in\{2,3,5,9,12,16,17,21,22,25,26,29\}$, and $C_i=(1,0)$ for the rest of the sensors. All observation noise $v_i(t) $  and 
	   process noise $w_j(t)$    follow the uniform distribution in $[0,0.01]$, where $(w_1(t),w_2(t))^{\sf T}=:w(t)$. The initial state is $x(0)=(25,25)^{\sf T}$. Each element of     $\hat x(0)$ follows the uniform distribution in $[0,25]$.  The bounds in Assumption~\ref{ass_bounds} are assumed to be $b_v=0.01, b_w=0.02,\eta_0=50.$ 
	We consider the time interval $t=0,1,\dots,500$,} and suppose that the attacker   inserts   signal  $a_i(t)=2(C_ix(t)+v_i(t))$  if   sensor $i$ is under attack.
 	We conduct a Monte Carlo experiment with $100$ runs. 
Define the average estimation error of   sensor $i$ and the maximum error over the whole network   by
\begin{align*}
\eta_i(t)&=\frac{1}{100}\sum_{j=1}^{100}\norm{e_i^j(t)},\\
\eta_{\max}(t)&=\frac{1}{100}\sum_{j=1}^{100}\max_{i\in\{1,\dots,30\}}\norm{e_i^j(t)},
\end{align*}
respectively, where $e_i^j(t)$ is the state estimation error of sensor $i$ at time $t$ in the $j$-th run.

\subsection{Secured Distributed Estimation}
In this subsection, we    verify the performance of Algorithm~\ref{alg:A} by considering the case that the attacked sensor set is time-varying. 
Assume that the distributed sensor networks under sensor attacks are switched in the  way of Fig.~\ref{fig:attack_networks2}  in each run, where a node in red means the node is under attack.

 By selecting $\beta=3$ and $L=3$ for Algorithm~\ref{alg:A},     estimation errors  $\eta_i(t),i=3,5,10,26$ are provided in   Fig.~\ref{fig:connections}~(a). From this figure, we see that there are three time instants, i.e., {$t=125,250,375$}, around which the error dynamics of the plotted sensors fluctuate. 
The reason why the error dynamics of one sensor increase lies in two aspects. First, after a certain time instant, this sensor is under attack. If so, its observations will be compromised and the estimation performance of this sensor will be degraded. For example, in   Fig.~\ref{fig:connections}~(a), the errors of sensor $5$ after time $t=375$, sensor $10$ after time $t=125$, and sensor $26$ after time $t=250$ all increase for this reason.    The second aspect is that   after a certain time instant, the neighbor of one senor is under attack. For this sensor, its error will also increase due to the consensus influence. For example, in   Fig.~\ref{fig:connections}~(a), the error of sensor $3$  increases after time $t=125$, since a neighbor of sensor $3$, i.e.,  sensor $10$, is under attack after time $t=125$.
{
	The reason that the estimation error of
	sensor 26 is  large  in the time interval 250--375  in    Fig.~\ref{fig:connections}~(a) is because of two reasons: 1) After $t=250$, sensor 26 is persistently under attack until $t=375$; 2) The   communication rate $L=1$ is small.
	Since the state estimate is affected by the compromised sensor observations, the estimation error will inevitably increase if  the information of neighbors  is not available in time. 	
	However, Algorithm~\ref{alg:A} provides a way to improve the estimation performance by increasing the communication rate $L$. 
	In   Fig.~\ref{fig:connections}~(b), the relationship between   estimation error $\eta(t)$ and   communication rate $L$ is studied.  The result shows that with the increase of communication rate $L$,   estimation error $\eta(t)$ decreases.}
The connection between   estimation  error $\eta(t)$ and   parameter  $\beta$ is studied in  Fig.~\ref{fig:connections}~(c) under  $L=4$. The figure shows that the estimation error with $\beta=5$ is the smallest within the errors by respectively setting $\beta=0.1,5,2000$. The result conforms to the discussion on the design of $\beta$ which should not be too small or too large in the algorithm setting.

To show the resilience of Algorithm~\ref{alg:A}, for the case that the attacked sensor set is time-invariant, under $L=4$ and $\beta=5$, we study the relationship between $|\mathcal{A}|$ and $\eta(t)$ in Fig.~\ref{fig:resilience} by randomly choosing a subset of the whole sensors (i.e., $\mathcal{A}$).  The result of this figure shows that with the increase of $|\mathcal{A}|$, the estimation error $\eta(t)$   becomes larger. Also, when  $|\mathcal{A}|$ is equal to or larger than half sensors, the estimation error is unstable.

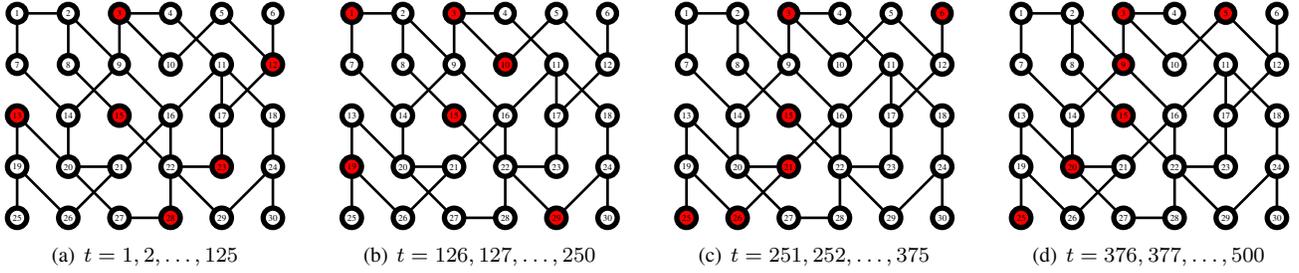
\begin{figure*}[t]
	\centering
	\subfigure[$t=1,2,\dots,125$]{
		\begin{tikzpicture}[scale=0.34, transform shape,line width=1pt]
		\node [draw,shape=circle,line width=2pt,minimum size=0.7cm] (1) at (0, 0) {1};
		\node[draw,shape=circle,line width=2pt,minimum size=0.7cm] (2) at +(2*1,0) {2};
		\node[draw,shape=circle,line width=2pt,fill=red,minimum size=0.7cm] (3) at +(2*2,0) {3};
		\node[draw,shape=circle,line width=2pt,minimum size=0.7cm] (4) at +(2*3,0) {4};
		\node[draw,shape=circle,line width=2pt,minimum size=0.7cm] (5) at +(2*4,0) {5};
		\node[draw,shape=circle,line width=2pt,minimum size=0.7cm] (6) at +(2*5,0) {6};
		\node[draw,shape=circle,line width=2pt,minimum size=0.7cm] (7) at (0, -2) {7};
		\node[draw,shape=circle,line width=2pt,minimum size=0.7cm] (8) at +(2,-2) {8};
		\node[draw,shape=circle,line width=2pt,minimum size=0.7cm] (9) at +(2*2,-2) {9};
		\node[draw,shape=circle,line width=2pt,minimum size=0.7cm] (10) at +(2*3,-2) {10};
		\node[draw,shape=circle,line width=2pt,minimum size=0.7cm] (11) at +(2*4,-2) {11};
		\node[draw,shape=circle,line width=2pt,fill=red,minimum size=0.7cm] (12) at +(2*5,-2) {12};
		\node[draw,shape=circle,line width=2pt,fill=red,minimum size=0.7cm] (13) at (0, -2*2) {13};
		\node[draw,shape=circle,line width=2pt,minimum size=0.7cm] (14) at +(2,-2*2) {14};
		\node[draw,shape=circle,line width=2pt,fill=red] (15) at +(2*2,-2*2) {15};
		\node[draw,shape=circle,line width=2pt,minimum size=0.7cm] (16) at +(2*3,-2*2) {16};
		\node[draw,shape=circle,line width=2pt,minimum size=0.7cm] (17) at +(2*4,-2*2) {17};
		\node[draw,shape=circle,line width=2pt,minimum size=0.7cm] (18) at +(2*5,-2*2) {18};
		\node[draw,shape=circle,line width=2pt,minimum size=0.7cm] (19) at (0, -2*3) {19};
		\node[draw,shape=circle,line width=2pt,minimum size=0.7cm] (20) at +(2,-2*3) {20};
		\node[draw,shape=circle,line width=2pt,minimum size=0.7cm] (21) at +(2*2,-2*3) {21};
		\node[draw,shape=circle,line width=2pt,minimum size=0.7cm] (22) at +(2*3,-2*3) {22};
		\node[draw,shape=circle,line width=2pt,fill=red] (23) at +(2*4,-2*3) {23};
		\node[draw,shape=circle,line width=2pt,minimum size=0.7cm] (24) at +(2*5,-2*3) {24};
		\node[draw,shape=circle,line width=2pt,minimum size=0.7cm] (25) at (0, -2*4) {25};
				\node[draw,shape=circle,line width=2pt,minimum size=0.7cm] (26) at +(2,-2*4) {26};
		\node[draw,shape=circle,line width=2pt,minimum size=0.7cm] (27) at +(2*2,-2*4) {27};
		\node[draw,shape=circle,line width=2pt,fill=red,minimum size=0.7cm] (28) at +(2*3,-2*4) {28};
		\node[draw,shape=circle,line width=2pt,minimum size=0.7cm] (29) at +(2*4,-2*4) {29};
		\node[draw,shape=circle,line width=2pt,minimum size=0.7cm] (30) at +(2*5,-2*4) {30};
		\foreach \from/\to in {1/2,1/7,2/8,2/9, 3/4,3/9,3/10,4/11, 5/10, 5/12,6/12,7/14,8/15,9/14,9/16,11/16,11/17,11/18,13/19,13/20,14/20,15/22,16/22,17/12,17/23,18/24,19/26,19/25,20/21,20/27,21/26,21/16,22/28,22/23,22/29,24/29,24/30,27/28}
		\draw [black] (\from) -- (\to);
		\end{tikzpicture}
	}
	\hskip 10pt
	\subfigure[$t=126,127,\dots,250$]{
		\begin{tikzpicture}[scale=0.34, transform shape,line width=1pt]
		\node [draw,shape=circle,line width=2pt,fill=red,minimum size=0.7cm] (1) at (0, 0) {1};
		\node[draw,shape=circle,line width=2pt,minimum size=0.7cm] (2) at +(2*1,0) {2};
		\node[draw,shape=circle,line width=2pt,fill=red,minimum size=0.7cm] (3) at +(2*2,0) {3};
		\node[draw,shape=circle,line width=2pt,minimum size=0.7cm] (4) at +(2*3,0) {4};
		\node[draw,shape=circle,line width=2pt,minimum size=0.7cm] (5) at +(2*4,0) {5};
		\node[draw,shape=circle,line width=2pt,minimum size=0.7cm] (6) at +(2*5,0) {6};
		\node[draw,shape=circle,line width=2pt,minimum size=0.7cm] (7) at (0, -2) {7};
		\node[draw,shape=circle,line width=2pt,minimum size=0.7cm] (8) at +(2,-2) {8};
		\node[draw,shape=circle,line width=2pt,minimum size=0.7cm] (9) at +(2*2,-2) {9};
		\node[draw,shape=circle,line width=2pt,fill=red,minimum size=0.7cm] (10) at +(2*3,-2) {10};
		\node[draw,shape=circle,line width=2pt,minimum size=0.7cm] (11) at +(2*4,-2) {11};
		\node[draw,shape=circle,line width=2pt,minimum size=0.7cm] (12) at +(2*5,-2) {12};
		\node[draw,shape=circle,line width=2pt,minimum size=0.7cm] (13) at (0, -2*2) {13};
		\node[draw,shape=circle,line width=2pt,minimum size=0.7cm] (14) at +(2,-2*2) {14};
		\node[draw,shape=circle,line width=2pt,fill=red] (15) at +(2*2,-2*2) {15};
		\node[draw,shape=circle,line width=2pt,minimum size=0.7cm] (16) at +(2*3,-2*2) {16};
		\node[draw,shape=circle,line width=2pt,minimum size=0.7cm] (17) at +(2*4,-2*2) {17};
		\node[draw,shape=circle,line width=2pt,minimum size=0.7cm] (18) at +(2*5,-2*2) {18};
		\node[draw,shape=circle,line width=2pt,fill=red,minimum size=0.7cm] (19) at (0, -2*3) {19};
		\node[draw,shape=circle,line width=2pt,minimum size=0.7cm] (20) at +(2,-2*3) {20};
		\node[draw,shape=circle,line width=2pt,minimum size=0.7cm] (21) at +(2*2,-2*3) {21};
		\node[draw,shape=circle,line width=2pt,minimum size=0.7cm] (22) at +(2*3,-2*3) {22};
		\node[draw,shape=circle,line width=2pt] (23) at +(2*4,-2*3) {23};
		\node[draw,shape=circle,line width=2pt,minimum size=0.7cm] (24) at +(2*5,-2*3) {24};
		\node[draw,shape=circle,line width=2pt,minimum size=0.7cm] (25) at (0, -2*4) {25};
				\node[draw,shape=circle,line width=2pt,minimum size=0.7cm] (26) at +(2,-2*4) {26};
		\node[draw,shape=circle,line width=2pt,minimum size=0.7cm] (27) at +(2*2,-2*4) {27};
		\node[draw,shape=circle,line width=2pt,minimum size=0.7cm] (28) at +(2*3,-2*4) {28};
		\node[draw,shape=circle,line width=2pt,fill=red,minimum size=0.7cm] (29) at +(2*4,-2*4) {29};
		\node[draw,shape=circle,line width=2pt,minimum size=0.7cm] (30) at +(2*5,-2*4) {30};
		\foreach \from/\to in {1/2,1/7,2/8,2/9, 3/4,3/9,3/10,4/11, 5/10, 5/12,6/12,7/14,8/15,9/14,9/16,11/16,11/17,11/18,13/19,13/20,14/20,15/22,16/22,17/12,17/23,18/24,19/26,19/25,20/21,20/27,21/26,21/16,22/28,22/23,22/29,24/29,24/30,27/28}
		\draw [black] (\from) -- (\to);
		\end{tikzpicture}	
	}
	\hskip 10pt
	\subfigure[$t=251,252,\dots,375$]{
		\begin{tikzpicture}[scale=0.34, transform shape,line width=1pt]
		\node [draw,shape=circle,line width=2pt,minimum size=0.7cm] (1) at (0, 0) {1};
		\node[draw,shape=circle,line width=2pt,minimum size=0.7cm] (2) at +(2*1,0) {2};
		\node[draw,shape=circle,line width=2pt,fill=red,minimum size=0.7cm] (3) at +(2*2,0) {3};
		\node[draw,shape=circle,line width=2pt,minimum size=0.7cm] (4) at +(2*3,0) {4};
		\node[draw,shape=circle,line width=2pt,minimum size=0.7cm] (5) at +(2*4,0) {5};
		\node[draw,shape=circle,line width=2pt,fill=red,minimum size=0.7cm] (6) at +(2*5,0) {6};
		\node[draw,shape=circle,line width=2pt,minimum size=0.7cm] (7) at (0, -2) {7};
		\node[draw,shape=circle,line width=2pt,minimum size=0.7cm] (8) at +(2,-2) {8};
		\node[draw,shape=circle,line width=2pt,minimum size=0.7cm] (9) at +(2*2,-2) {9};
		\node[draw,shape=circle,line width=2pt,minimum size=0.7cm] (10) at +(2*3,-2) {10};
		\node[draw,shape=circle,line width=2pt,minimum size=0.7cm] (11) at +(2*4,-2) {11};
		\node[draw,shape=circle,line width=2pt,minimum size=0.7cm] (12) at +(2*5,-2) {12};
		\node[draw,shape=circle,line width=2pt,minimum size=0.7cm] (13) at (0, -2*2) {13};
		\node[draw,shape=circle,line width=2pt,minimum size=0.7cm] (14) at +(2,-2*2) {14};
		\node[draw,shape=circle,line width=2pt,fill=red] (15) at +(2*2,-2*2) {15};
		\node[draw,shape=circle,line width=2pt,minimum size=0.7cm] (16) at +(2*3,-2*2) {16};
		\node[draw,shape=circle,line width=2pt,minimum size=0.7cm] (17) at +(2*4,-2*2) {17};
		\node[draw,shape=circle,line width=2pt,minimum size=0.7cm] (18) at +(2*5,-2*2) {18};
		\node[draw,shape=circle,line width=2pt,minimum size=0.7cm] (19) at (0, -2*3) {19};
		\node[draw,shape=circle,line width=2pt,minimum size=0.7cm] (20) at +(2,-2*3) {20};
		\node[draw,shape=circle,line width=2pt,fill=red,minimum size=0.7cm] (21) at +(2*2,-2*3) {21};
		\node[draw,shape=circle,line width=2pt,minimum size=0.7cm] (22) at +(2*3,-2*3) {22};
		\node[draw,shape=circle,line width=2pt] (23) at +(2*4,-2*3) {23};
		\node[draw,shape=circle,line width=2pt,minimum size=0.7cm] (24) at +(2*5,-2*3) {24};
		\node[draw,shape=circle,line width=2pt,fill=red,minimum size=0.7cm] (25) at (0, -2*4) {25};
				\node[draw,shape=circle,line width=2pt,fill=red,minimum size=0.7cm] (26) at +(2,-2*4) {26};
		\node[draw,shape=circle,line width=2pt,minimum size=0.7cm] (27) at +(2*2,-2*4) {27};
		\node[draw,shape=circle,line width=2pt,minimum size=0.7cm] (28) at +(2*3,-2*4) {28};
		\node[draw,shape=circle,line width=2pt,minimum size=0.7cm] (29) at +(2*4,-2*4) {29};
		\node[draw,shape=circle,line width=2pt,minimum size=0.7cm] (30) at +(2*5,-2*4) {30};
		\foreach \from/\to in {1/2,1/7,2/8,2/9, 3/4,3/9,3/10,4/11, 5/10, 5/12,6/12,7/14,8/15,9/14,9/16,11/16,11/17,11/18,13/19,13/20,14/20,15/22,16/22,17/12,17/23,18/24,19/26,19/25,20/21,20/27,21/26,21/16,22/28,22/23,22/29,24/29,24/30,27/28}
		\draw [black] (\from) -- (\to);
		\end{tikzpicture}	
	}
	\hskip 10pt
	\subfigure[$t=376,377,\dots,500$]{
		\begin{tikzpicture}[scale=0.34, transform shape,line width=1pt]
		\node [draw,shape=circle,line width=2pt,minimum size=0.7cm] (1) at (0, 0) {1};
		\node[draw,shape=circle,line width=2pt,minimum size=0.7cm] (2) at +(2*1,0) {2};
		\node[draw,shape=circle,line width=2pt,fill=red,minimum size=0.7cm] (3) at +(2*2,0) {3};
		\node[draw,shape=circle,line width=2pt,minimum size=0.7cm] (4) at +(2*3,0) {4};
		\node[draw,shape=circle,line width=2pt,fill=red,minimum size=0.7cm] (5) at +(2*4,0) {5};
		\node[draw,shape=circle,line width=2pt,minimum size=0.7cm] (6) at +(2*5,0) {6};
		\node[draw,shape=circle,line width=2pt,minimum size=0.7cm] (7) at (0, -2) {7};
		\node[draw,shape=circle,line width=2pt,minimum size=0.7cm] (8) at +(2,-2) {8};
		\node[draw,shape=circle,line width=2pt,fill=red,minimum size=0.7cm] (9) at +(2*2,-2) {9};
		\node[draw,shape=circle,line width=2pt,minimum size=0.7cm] (10) at +(2*3,-2) {10};
		\node[draw,shape=circle,line width=2pt,minimum size=0.7cm] (11) at +(2*4,-2) {11};
		\node[draw,shape=circle,line width=2pt,minimum size=0.7cm] (12) at +(2*5,-2) {12};
		\node[draw,shape=circle,line width=2pt,minimum size=0.7cm] (13) at (0, -2*2) {13};
		\node[draw,shape=circle,line width=2pt,minimum size=0.7cm] (14) at +(2,-2*2) {14};
		\node[draw,shape=circle,line width=2pt,fill=red] (15) at +(2*2,-2*2) {15};
		\node[draw,shape=circle,line width=2pt,minimum size=0.7cm] (16) at +(2*3,-2*2) {16};
		\node[draw,shape=circle,line width=2pt,minimum size=0.7cm] (17) at +(2*4,-2*2) {17};
		\node[draw,shape=circle,line width=2pt,minimum size=0.7cm] (18) at +(2*5,-2*2) {18};
		\node[draw,shape=circle,line width=2pt,minimum size=0.7cm] (19) at (0, -2*3) {19};
		\node[draw,shape=circle,line width=2pt,fill=red,minimum size=0.7cm] (20) at +(2,-2*3) {20};
		\node[draw,shape=circle,line width=2pt,minimum size=0.7cm] (21) at +(2*2,-2*3) {21};
		\node[draw,shape=circle,line width=2pt,minimum size=0.7cm] (22) at +(2*3,-2*3) {22};
		\node[draw,shape=circle,line width=2pt] (23) at +(2*4,-2*3) {23};
		\node[draw,shape=circle,line width=2pt,minimum size=0.7cm] (24) at +(2*5,-2*3) {24};
		\node[draw,shape=circle,line width=2pt,fill=red,minimum size=0.7cm] (25) at (0, -2*4) {25};
						\node[draw,shape=circle,line width=2pt,minimum size=0.7cm] (26) at +(2,-2*4) {26};
		\node[draw,shape=circle,line width=2pt,minimum size=0.7cm] (27) at +(2*2,-2*4) {27};
		\node[draw,shape=circle,line width=2pt,minimum size=0.7cm] (28) at +(2*3,-2*4) {28};
		\node[draw,shape=circle,line width=2pt,minimum size=0.7cm] (29) at +(2*4,-2*4) {29};
		\node[draw,shape=circle,line width=2pt,minimum size=0.7cm] (30) at +(2*5,-2*4) {30};
		\foreach \from/\to in {1/2,1/7,2/8,2/9, 3/4,3/9,3/10,4/11, 5/10, 5/12,6/12,7/14,8/15,9/14,9/16,11/16,11/17,11/18,13/19,13/20,14/20,15/22,16/22,17/12,17/23,18/24,19/26,19/25,20/21,20/27,21/26,21/16,22/28,22/23,22/29,24/29,24/30,27/28}
		\draw [black] (\from) -- (\to);
		\end{tikzpicture}	
	}
	\caption{Sensor network with different sets of attacked sensors over four time intervals.}
	\label{fig:attack_networks2}
\end{figure*}

\begin{figure*}[t]
	\centering
	\subfigure[The estimation errors of some sensors.]{
		\includegraphics[scale=0.39]{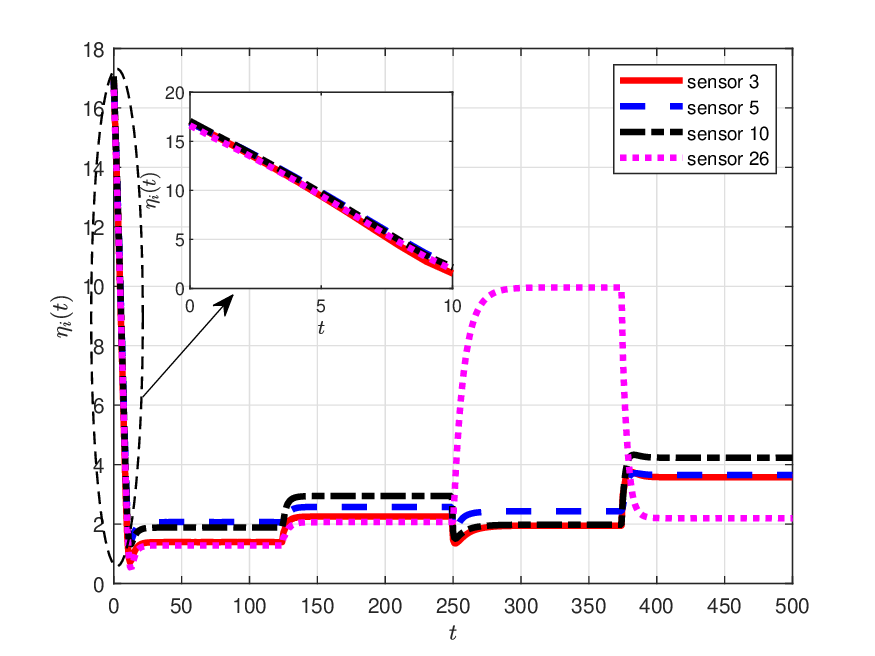}}
	\subfigure[\label{asyn}The relationship between $L$ and $\eta_{\max}(t)$.]{
		\includegraphics[scale=0.39]{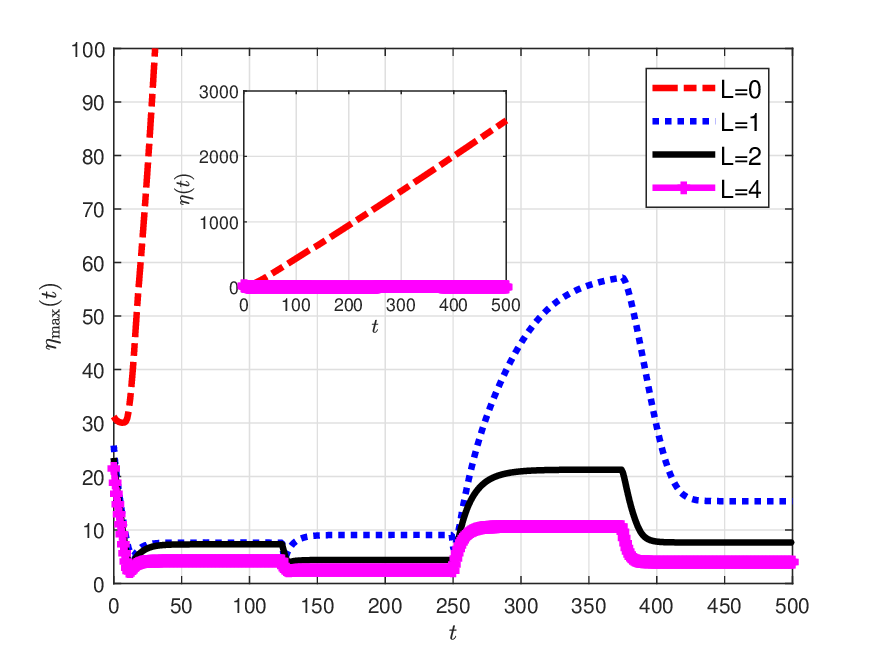}}
	\subfigure[\label{linkf}The relationship between $\beta$ and $\eta_{\max}(t)$.]{
		\includegraphics[scale=0.39]{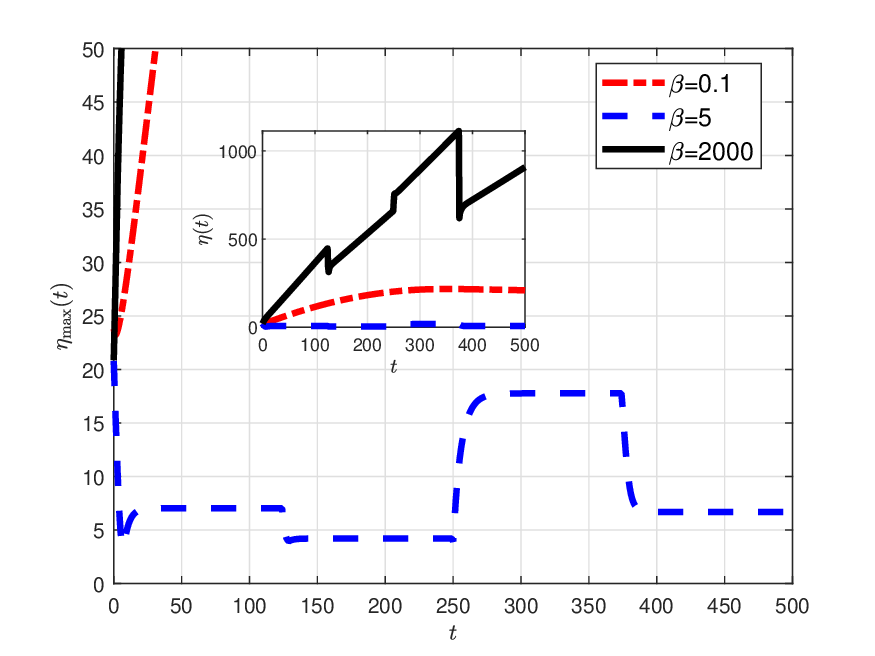}}
	\caption{Estimation performance of Algorithm~\ref{alg:A} for the sensor networks in Fig.~\ref{fig:attack_networks2}.}\label{fig:connections}
\end{figure*}
\begin{figure*}[t]
	\centering
	\subfigure[The estimation errors of some sensors.]{
		\includegraphics[scale=0.39]{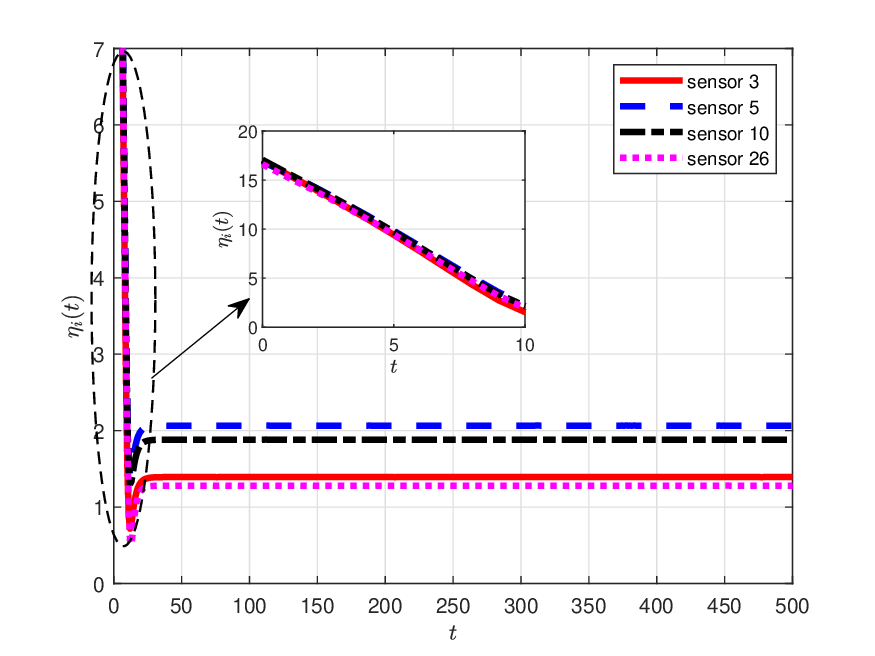}}
	\subfigure[\label{asyn2}The relationship between $L$ and $\eta_{\max}(t)$.]{
		\includegraphics[scale=0.39]{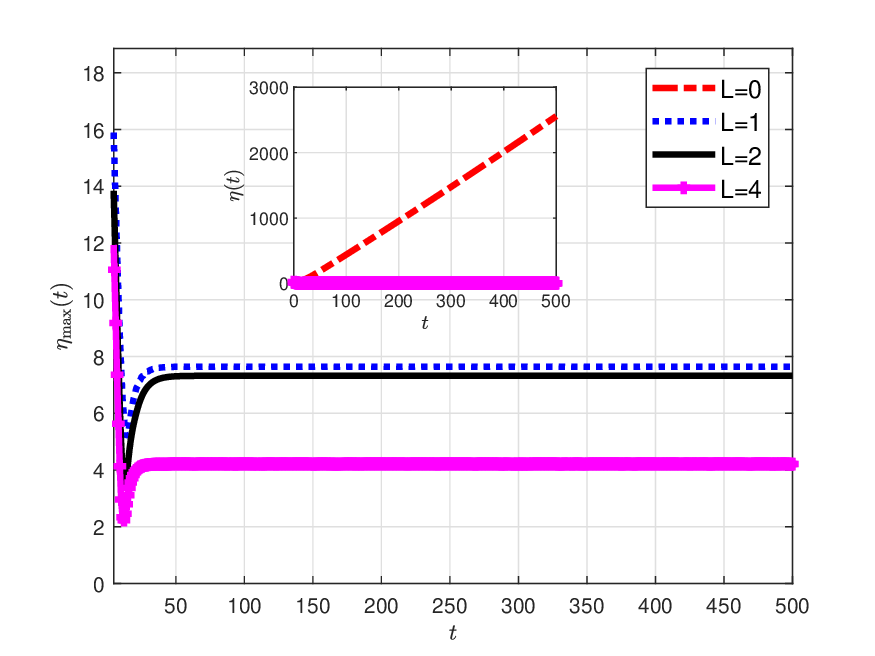}}
	\subfigure[\label{inertia2}The relationship between $\beta$ and $\eta_{\max}(t)$.]{
		\includegraphics[scale=0.39]{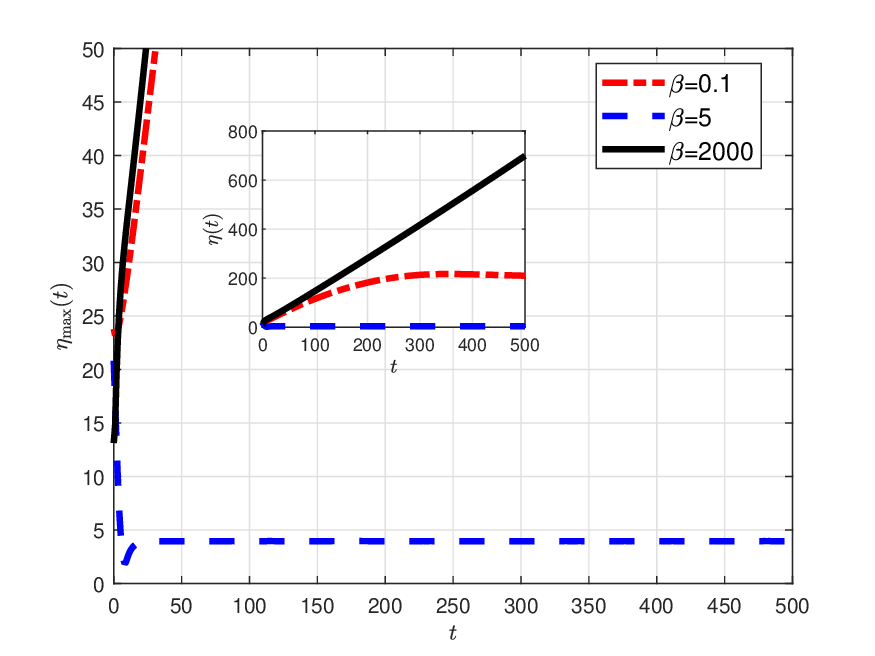}}
	\caption{Estimation performance of Algorithm~\ref{alg:A} for the sensor network in Fig.~\ref{fig:attack_networks}.}\label{fig:connections2}
\end{figure*}
\begin{figure}[t]
	\centering
	\includegraphics[scale=0.5]{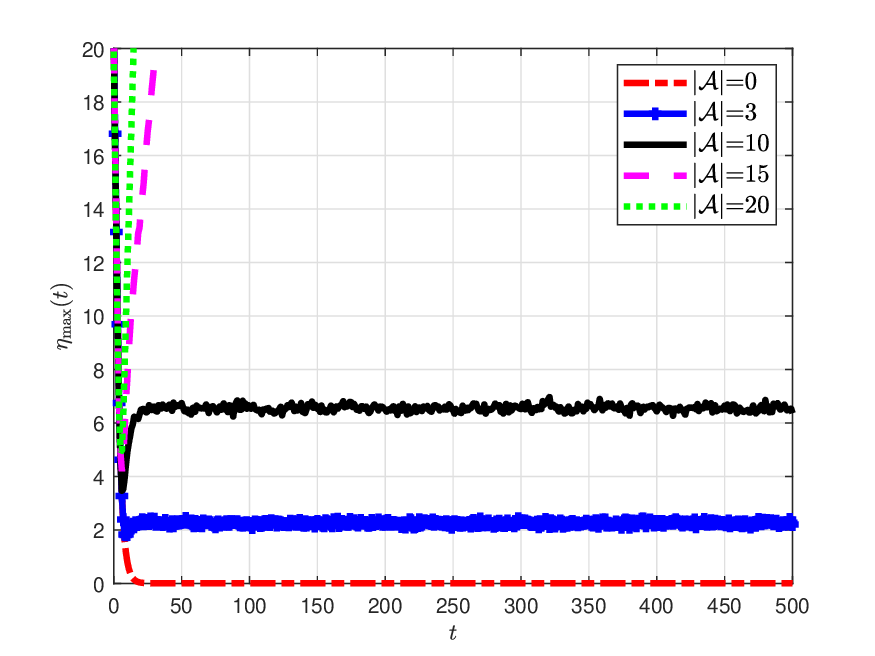}
	\caption {Resilience of Algorithm~\ref{alg:A}.}
	\label{fig:resilience}
\end{figure}
\begin{figure}[t]
	\centering
	\includegraphics[scale=0.5]{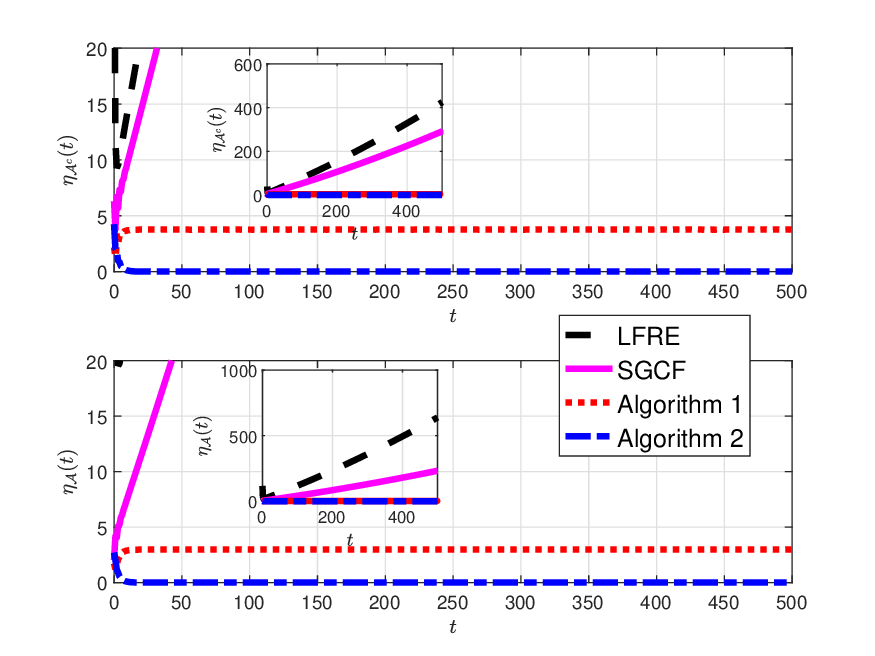}
	\caption {The estimation error comparison of algorithms.}
	\label{fig:detec}
\end{figure}
\subsection{Secured Distributed Estimation Under Detection}
In this subsection, we consider the introduction case in Fig.~\ref{fig:attack_networks} where the attacked sensor set is time-invariant, with $\mathcal{A}(t)=\mathcal{A}=\{3,12,13,15,23,28\},$ under which the estimation performance of  Algorithms \ref{alg:A} and \ref{alg:B} is studied. 

 By selecting $\beta=3$ and $L=3$ for Algorithm~\ref{alg:A},   estimation errors  $\eta_i(t),i=3,5,10,26$ are provided   in   Fig.~\ref{fig:connections2}~(a). Compared to the result in   Fig.~\ref{fig:connections}~(a), the estimation errors of these sensors  in  Fig.~\ref{fig:connections2}~(a) do no fluctuate. Since the attacked sensor set is time-invariant,  the   estimation errors become steady after   transience as shown in the figure. 
 The relationships between   parameters $L$, $\beta$ and    estimation error $\eta(t)$ are shown in   Fig.~\ref{fig:connections2}~(b) and (c). Note that in   Fig.~\ref{fig:connections2}~(b), when $L=0$,   estimation error $\eta(t)$ is divergent, since  for the attacked sensors, the local observability is violated. Fig.~\ref{fig:connections2}~(b) also shows that with the increase of   communication rate $L$, the estimation error of Algorithm~\ref{alg:A} is decreasing. In   Fig.~\ref{fig:connections2}~(c),   parameter $\beta=5$ can lead to the stable estimation error, while in  Fig.~\ref{fig:connections}~(c), the estimation error for   parameter $\beta=5$ fluctuates due to the switching of the attacked sensor set.

{In order to compare with existing algorithms under the situation in Fig.~\ref{fig:attack_networks}}, we define the estimation errors for the attacked sensor set and attack-free sensor set, respectively:  $\eta_{\mathcal{A}}(t)=\frac{1}{100}\sum_{j=1}^{100}\max_{i\in \mathcal{A}}\norm{e_i^j(t)}$, and $\eta_{\mathcal{A}^c}(t)=\frac{1}{100}\sum_{j=1}^{100}\max_{i\in \mathcal{A}^c}\norm{e_i^j(t)}.$
In Fig.~\ref{fig:detec}, the estimation performance of Algorithms 1 and 2 with $L=5$, the local-filtering based resilient estimation (LFRE) \cite{mitra2019byzantine}  and the scalar-gain consensus filter\footnote{The filter has the same form as Algorithm~\ref{alg:A} but   $k_i(t)=1$ for $t\geq 1.$ It follows the idea   in \cite{Khan2014Collaborative}, which did not consider the attack scenario.} (SGCF) is compared. The figure shows that Algorithms \ref{alg:A} and \ref{alg:B}   provide estimates with stable estimation errors for both attacked sensors and attack-free sensors, but the estimation errors of LFRE and SGCF are divergent. 
Compared to Algorithm~\ref{alg:A}, Algorithm~\ref{alg:B}  provides   smaller estimation errors by successfully detecting all the attacked sensors.  

\section{Conclusion and Future Work}\label{sec_conclusion}
This paper studied the   distributed filtering problem for linear time-invariant systems with bounded noise { under false-data injection attacks in sensors networks}, where a malicious attacker can compromise a time-varying and unknown subset of  sensors and manipulate their observations arbitrarily.
First, we proposed a   distributed saturation-based filter.
Then, we provided a sufficient condition to guarantee boundedness of the estimation error.  
By confining the attacked sensor set to be time-invariant, we  then  modified the filter by adding an    attack detection scheme.  Moreover, for the noise-free case, we proved that the state estimate of each sensor asymptotically converges to the system state under certain conditions. 

{There are some future directions. Since this paper employs a two-time-scale scheme in filter design, it is interesting to develop algorithms which use one communication at each time. Other directions include considering more general systems (e.g., nonlinear systems) and   more complex sensor networks (e.g., random or  communication-delayed networks). }

{\section*{ACKNOWLEDGMENT}
	The authors are grateful to  the anonymous reviewers for their insightful comments and suggestions.}

\appendix
\subsection{Proof of Lemma \ref{lem_stability}}\label{pf_lem_stability}
First, we prove 2). Due to $t_0\in\Gamma$, we have 
\begin{align}\label{lem_pf_1}
F(x_{t_0-1})x_{t_0-1}+q_0= x_{t_0}\leq x_{t_0-1}.
\end{align}
For $t=t_0+1$, by \eqref{lem_pf_1} and the condition that $F(\cdot)\in[0,1]$ is a monotonically non-decreasing function, we have
$x_{t_0+1}=F(x_{t_0})x_{t_0}+q_0 
\leq F(x_{t_0-1})x_{t_0-1}+q_0= x_{t_0}.$
By recursively applying the above procedure, we have $\sup_{t\geq t_0}x_t\leq x_{t_0}.$

Next, we prove 1), which trivially holds for $t=t_0$. Consider the case of $t>t_0$ in the following. 
If $q_0\neq 0$, it follows from \eqref{lem_pf_1} that $F(x_{t_0-1})\in [0,1)$. Due to $F(x_{t_0})\leq F(x_{t_0-1})$, we have $F(x_{t_0})\in [0,1)$. By 2) and the condition that $F(\cdot)\in[0,1]$ is a monotonically non-decreasing function, we have $F(x_{t-1})\leq F(x_{t_0})$. Then $x_{t}\leq F(x_{t_0})x_{t-1}+q_0$ with $F(x_{t_0})\in [0,1)$.
Thus, 1) is satisfied by recursively applying the inequality for $t-t_0$ times.

Finally, we prove 3). 
By 1) and the definition of $\limsup$, we have $\limsup\limits_{t\rightarrow \infty}x_t=\inf\limits_{t\in\mathbb{Z}^+}\sup\limits_{m\geq  t}x_m
\leq\inf\limits_{t\in\Gamma}\sup\limits_{m\geq  t}x_m
\leq 	\inf\limits_{t\in \Gamma} x_{ t}.$

\subsection{Proof of Theorem~\ref{thm_stability}}\label{app_A}
Let   $e_i(t)=\tilde e(t)+\bar e_i(t)$, 
where $\bar e_i(t):=\hat x_{i}(t)-\hat x_{avg}$, and $\tilde e(t)=\hat x_{avg}-x(t),$ and $\hat x_{avg}(t):=\frac{1}{N}\sum_{i=1}^N\hat x_{i}(t)$.
 Besides, we denote
\begin{align}\label{eq_denotations}
X(t)&=\textbf{1}_N\otimes  x(t)\in\mathbb{R}^{Nn},\nonumber\\
\bar E(t)&=\left(\bar e_1^{\sf T}(t),\dots,\bar e_N^{\sf T}(t)\right)^{\sf T}\in\mathbb{R}^{Nn},\nonumber\\
Y(t)&=\left(y_1^{\sf T}(t),\dots,y_N^{\sf T}(t)\right)^{\sf T}\in\mathbb{R}^{N},\nonumber\\
V(t)&=\left(v_1^{\sf T}(t),\dots,v_N^{\sf T}(t)\right)^{\sf T}\in\mathbb{R}^{N},\nonumber\\
\hat X(t)&=\left(\hat x_1^{\sf T}(t),\dots,\hat  x_N^{\sf T}(t)\right)^{\sf T}\in\mathbb{R}^{Nn},\\
\bar C&=\diag\{C_1,\dots,C_N\}\in\mathbb{R}^{N\times Nn},\nonumber\\
\bar K(t)&=\diag\{k_1(t),\dots,k_N(t)\}\in\mathbb{R}^{N\times N},\nonumber\\
P_{Nn}&=\frac{1}{N}(\textbf{1}_N\otimes  I_n)(\textbf{1}_N\otimes  I_n)^{\sf T}\in\mathbb{R}^{Nn\times Nn}.\nonumber
\end{align}

The idea for the proof is that we first  show  $\norm{\bar e_i(t)}$ is upper bounded by $p(t)$, and then we prove   $\norm{\tilde e(t)}$ is upper bounded by the quantities in 1)--3) of the theorem. The following lemma with   a similar proof as in \cite{He2020Secured}  ensures $\norm{\bar e_i(t)}\leq p(t)$. 
\begin{lemma}\label{prop_consensus}
	Consider Algorithm~\ref{alg:A}, and let Assumptions~\ref{ass_bounds}--\ref{ass_graph} hold.
	If $ \alpha=\frac{2}{\lambda_2(\mathcal{L})+\lambda_{\max}(\mathcal{L})}$, and $L>\frac{\ln \norm{A}}{\ln \gamma^{-1}}$,
	then for $t\geq 0,	$
	\begin{align}\label{eq_error_norm}
	\norm{\bar E(t)}
	\leq p(t),
	\end{align}
where $p(t)$ and $\bar E(t)$   are defined in  \eqref{eq_p_R} and \eqref{eq_denotations}, respectively.
\end{lemma}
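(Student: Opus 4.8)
The plan is to pass to stacked vector form and exploit the separation between the average trajectory and the disagreement (consensus-error) subspace. First I would write the consensus step of Algorithm~\ref{alg:A} as $\hat X(t)=(W^L\otimes I_n)\tilde X(t)$, where $\tilde X(t)=[\tilde x_1^T(t),\dots,\tilde x_N^T(t)]^T$, $W=I_N-\alpha\mathcal{L}$, and I use that each of the $L$ rounds multiplies by $I_{Nn}-\alpha(\mathcal{L}\otimes I_n)=W\otimes I_n$. Because $\mathcal{G}$ is connected (Assumption~\ref{ass_graph}) the symmetric matrix $\mathcal{L}$ has eigenvalues $0=\lambda_1<\lambda_2\le\cdots\le\lambda_{\max}$, so $W$ has eigenvalue $1$ with eigenvector $\textbf{1}_N$ and, with the stated choice $\alpha=2/(\lambda_2(\mathcal{L})+\lambda_{\max}(\mathcal{L}))$, every remaining eigenvalue $1-\alpha\lambda_i$ has modulus at most $\gamma=\frac{\lambda_{\max}(\mathcal{L})-\lambda_2(\mathcal{L})}{\lambda_{\max}(\mathcal{L})+\lambda_2(\mathcal{L})}<1$. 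Hence $W^L\textbf{1}_N=\textbf{1}_N$, $\textbf{1}_N^TW^L=\textbf{1}_N^T$, and $W^L$ restricted to $\textbf{1}_N^\perp$ has operator norm exactly $\gamma^L$.

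Next I would record two commutation facts that let the projection $P_{Nn}$ slide through the dynamics. From $W^L\textbf{1}_N=\textbf{1}_N$ I get $(W^L\otimes I_n)P_{Nn}=P_{Nn}(W^L\otimes I_n)=P_{Nn}$, hence $(I_{Nn}-P_{Nn})(W^L\otimes I_n)=(W^L\otimes I_n)(I_{Nn}-P_{Nn})$; and $(I_N\otimes A)$ commutes with $P_{Nn}$ as well. Since the consensus step preserves the average, $\bar E(t)=(I_{Nn}-P_{Nn})\hat X(t)=(W^L\otimes I_n)(I_{Nn}-P_{Nn})\tilde X(t)$, and because $(I_{Nn}-P_{Nn})\tilde X(t)$ lies in $\textbf{1}_N^\perp\otimes\mathbb{R}^n$ the contraction established in the previous step yields $\norm{\bar E(t)}\le\gamma^L\norm{(I_{Nn}-P_{Nn})\tilde X(t)}$.

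It then remains to bound the disagreement of the post-update states. Writing $\tilde x_i(t)=A\hat x_i(t-1)+g_i(t)$ with $g_i(t)=k_i(t)C_i^T(y_i(t)-C_iA\hat x_i(t-1))$ and stacking $G(t)=[g_1^T(t),\dots,g_N^T(t)]^T$, the commutation with $P_{Nn}$ gives $(I_{Nn}-P_{Nn})\tilde X(t)=(I_N\otimes A)\bar E(t-1)+(I_{Nn}-P_{Nn})G(t)$. The decisive use of the saturation design enters here: by~\eqref{eq_K} we have $|k_i(t)(y_i(t)-C_iA\hat x_i(t-1))|\le\beta$, and with $\norm{C_i}=1$ this gives $\norm{g_i(t)}\le\beta$, so $\norm{G(t)}\le\sqrt{N}\beta$ and therefore $\norm{(I_{Nn}-P_{Nn})G(t)}\le\sqrt{N}\beta$. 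Combining the three estimates produces the scalar recursion $\norm{\bar E(t)}\le\gamma^L\norm{A}\norm{\bar E(t-1)}+\sqrt{N}\beta\gamma^L$.

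Finally I would close the recursion. The hypothesis $L>\frac{\ln\norm{A}}{\ln\gamma^{-1}}$ is exactly $\norm{A}\gamma^L<1$, so the contraction coefficient is strictly less than one; and since all sensors share the same initial estimate (Assumption~\ref{ass_bounds}), $\bar E(0)=0$. Unrolling the affine recursion gives $\norm{\bar E(t)}\le\sqrt{N}\beta\gamma^L\sum_{j=0}^{t-1}(\norm{A}\gamma^L)^j=\frac{\sqrt{N}\beta\gamma^L\,(1-(\norm{A}\gamma^L)^t)}{1-\norm{A}\gamma^L}=:p(t)$, the claimed bound, which is monotone and converges to $p_0$. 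The main obstacle is the second step: correctly identifying the consensus operator's contraction factor $\gamma^L$ on the disagreement subspace and verifying that both $W^L$ and $I_N\otimes A$ commute with $P_{Nn}$, so that the two-step (observation update followed by $L$-round consensus) dynamics collapse to a clean scalar recursion in $\norm{\bar E(t)}$; the saturation bound $\norm{g_i(t)}\le\beta$ is what keeps the forcing term finite irrespective of the attack signals.
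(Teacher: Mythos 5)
Your proof is correct and follows essentially the approach the paper intends: the paper omits this proof for space (referring to \cite{He2020Secured}), but the machinery it uses elsewhere --- the projection $P_{Nn}$, the commutation identity $(\textbf{1}_N^T\otimes I_n)P_{Nn}=(\textbf{1}_N^T\otimes I_n)\left(I_{Nn}-\alpha(\mathcal{L}\otimes I_n)\right)^L=(\textbf{1}_N^T\otimes I_n)$ in the proof of Theorem \ref{thm_stability}, and the $\gamma^L$-contraction bound on the disagreement subspace in the proof of Theorem \ref{thm_observer} --- is exactly what you deploy. Indeed, the closed form of $p(t)$ in \eqref{eq_notations} is precisely the unrolled affine recursion $\norm{\bar E(t)}\leq\norm{A}\gamma^L\norm{\bar E(t-1)}+\sqrt{N}\beta\gamma^L$ with $\bar E(0)=0$ that you derive, confirming your argument reconstructs the intended proof.
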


\begin{figure*}[ht]
{
		\begin{align}\label{eq_notations}
		\begin{split}
		&	\line(1,0){470}\\
		&\hat x_{avg}(t)=A\hat x_{avg}(t-1)
		+ \frac{1}{N}(\textbf{1}_N^{\sf T}\otimes  I_n)P_{Nn}\left(I_{Nn}-\alpha(\mathcal{L}\otimes I_n)\right)^L
		\bar C^{\sf T}\bar K(t)h(t),\\
		&\bar K_{\mathcal{J}(t)}=\diag\{k_1(t)\mathbb I_{1\in \mathcal{J}(t)},\dots,k_N(t)\mathbb I_{N\in \mathcal{J}(t)}\},\quad 
		\bar K_{\mathcal{J}^c(t)}=\diag\{k_1(t)\mathbb I_{1\in \mathcal{J}^c(t)},\dots,k_N(t)\mathbb I_{N\in \mathcal{J}^c(t)}\},\\
		&\bar m_{t}=w(t-1)+\frac{1}{N}(\textbf{1}_N^{\sf T}\otimes  I_n)\bar C^{\sf T}\bar K_{\mathcal{J}(t)}(t)\left(\bar C((I_N\otimes A)\bar E(t-1)-I_{N}\otimes w(t-1))-V(t)\right)\\
		&h(t)=Y(t)-\bar C(I_N\otimes A)\hat X(t-1), \quad  M_{t}=(I_n-\frac{1}{N}\sum_{i\in\mathcal{J}(t)}k_{i}(t)C_{i}^{\sf T}C_{i})A,
		\quad \tilde m_{t}=\frac{1}{N}(\textbf{1}_N^{\sf T}\otimes  I_n) \bar C^{\sf T}\bar K_{\mathcal{J}^c(t)}(t)h(t)\\
		&\line(1,0){470}
		\end{split}
		\end{align}
	}
\end{figure*}

\textbf{Proof of Theorem~\ref{thm_stability}:}
{ It follows from \eqref{sequence} and \eqref{condition_thm}  that  $\rho_1\leq \rho_0=\eta_0$, which means that 
	$1\in\Gamma=\{t\geq 1|\rho_{t}\leq \rho_{t-1}\}$.} In the following, we prove 1)--3). 

{Under Assumption~\ref{ass_attacker}, there are at least $N-s$ attack-free sensors at each time. Suppose $\mathcal{J}(t)$ is the set of these $N-s$ sensors, i.e., $\mathcal{J}(t)\subseteq \mathcal{A}^c(t)$ with $|\mathcal{J}(t)|=N-s$. Denote $\mathcal{J}^c(t)=\mathcal{V}\setminus\mathcal{J}(t)$, which satisfies $|\mathcal{J}^c(t)|=s$ due to $|\mathcal{V}|=N.$  }
By Algorithm~\ref{alg:A} and the notations in \eqref{eq_denotations}, we have the dynamics of $\hat x_{avg}(t)$ in \eqref{eq_notations}.
Due to $(\textbf{1}_N^{\sf T}\otimes  I_n)P_{Nn}=(\textbf{1}_N^{\sf T}\otimes  I_n)\left(I_{Nn}-\alpha(\mathcal{L}\otimes I_n)\right)^L=(\textbf{1}_N^{\sf T}\otimes  I_n),$
we have
\begin{align}\label{eq_error_track_simple}
\tilde e(t)
=&M_{t}\tilde e(t-1)-\bar m_{t}+\tilde m_{t},
\end{align}
where $M_{t}$, $\tilde m_{t}$, and $\bar m_{t}$ are given in \eqref{eq_notations}.
Note that $\tilde m_{t}$ can be rewritten in the following way
\begin{align*}
\tilde m_{t}=\frac{1}{N}\sum_{i\in\mathcal{J}^c(t)}C_i^{\sf T}k_i(t)(y_{i}(t)-C_{i}A\hat x_{i}(t-1)).
\end{align*}
{Due to $k_{i}(t)=\min\{1,\frac{\beta}{|y_{i}(t)-C_{i}A\hat x_{i}(t-1)|}\}$, it holds that $|k_i(t)(y_{i}(t)-C_{i}A\hat x_{i}(t-1))|\leq \beta$. Since we assume $\|C_i\|=1$ after the system model, it holds that $\|\tilde m_{t}\|\leq \frac{1}{N}\sum_{i\in\mathcal{J}^c(t)}\|C_i^{\sf T}\|\beta\leq |\mathcal{J}^c(t)|\frac{\beta}{N}.$ Due to $|\mathcal{J}^c(t)|= s$, 
	we have 
	\begin{align}\label{eq_attack_error}
	\norm{\tilde m_{t}}\leq&  \frac{s}{N}\beta.
	\end{align}}
Regarding $\bar m_{t}$, by Assumption~$\ref{ass_attacker}$ and $k_{i}(t)\leq 1$, we have
\begin{align}\label{eq_condition}
\norm{\bar m_{t}}&\leq \norm{w(t-1)}+\frac{|\mathcal{J}(t)|}{N}(\norm{A}\norm{\bar E(t-1)}\nonumber\\
&\quad+\norm{w(t-1)}+b_v)\nonumber\\
&\leq \frac{N-s}{N}(b_{w}+b_{v}+\norm{A}p_0)+b_{w},
\end{align}
where the second inequality is obtained by Lemma \ref{prop_consensus} and $\sup_{t\geq 0}p(t)\leq p_0$, where $p_0$ is defined in \eqref{notation}.
Based on \eqref{eq_error_track_simple}--\eqref{eq_condition}, we construct the sequence $ \{\rho_{t}\}$ in \eqref{sequence}. 
In the following, we prove  that $\norm{\tilde e(t)}\leq\rho_{t}$.

At the initial time, i.e., $t=0,$ by Assumption~\ref{ass_bounds}, we have $\norm{\tilde e(0)}=\norm{ \hat x_{avg}(0)-x(0)}\leq \frac{1}{N}\sum_{i=1}^{N}\norm{\hat x_{i}(0)-x(0)}\leq \eta_0.$ Due to $\rho_0=\eta_0$, $\norm{\tilde e(t)}\leq\rho_{t}$ for $t=0.$ Suppose at time $t-1$, $\norm{\tilde e(t-1)}\leq \rho_{t-1}.$  
At time $t$, for $i\in\mathcal{J}(t)$, we consider  
\begin{align}\label{eq_inno}
&|y_{i}(t)-C_iA\hat x_{i}(t-1)|\nonumber\\
\leq&\norm{A}\norm{e_{i}(t-1)}+b_{w}+b_{v}\nonumber\\
\leq&\norm{A}(\norm{\bar e_{i}(t-1)}+\norm{\tilde e(t-1)})+b_{w}+b_{v}\nonumber\\
\leq&\norm{A}(p_0+\rho_{t-1})+b_{w}+b_{v},
\end{align}
where the last inequality of \eqref{eq_inno} is obtained by noting that
$\sup_{t\geq 0}\norm{\bar e_{i}(t)}\leq \sup_{t\geq 0}p(t)\leq p_0$ and $p_0$ is defined in \eqref{notation}.
Recall the form of $k_{i}(t)$, 
by (\ref{eq_inno}), for $i\in\mathcal{J}(t)$, we have $k_{i}(t)\geq k^*(\rho_{t-1}):=\min\bigg\{1,\frac{\beta}{\norm{A}(p_0+\rho_{t-1})+b_{w}+b_{v}}\bigg\}>0.$
 Then
\begin{align}\label{eq_transition}
\norm{M_{t}}\leq &\norm{A}\norm{\left(I_n-\frac{k^*(\rho_{t-1})}{N}\sum_{i\in\mathcal{J}(t)}C_{i}^{\sf T}C_{i}\right)}\nonumber\\
\leq&\norm{A}\left(1-\frac{k^*(\rho_{t-1})}{N}\lambda_0\right).
\end{align}
Taking norm   on both sides of \eqref{eq_error_track_simple} and considering \eqref{sequence}, \eqref{eq_attack_error},  \eqref{eq_condition},  and \eqref{eq_transition}, we have $\norm{\tilde e(t)}\leq \rho_{t}.$ 

Since the defined $F(\rho_t)$ in \eqref{notation} is monotonically non-decreasing function,   conclusions 1)--3) of this theorem are obtained by  applying the results in Lemma \ref{lem_stability}, $\norm{e_i(t)}\leq \norm{\tilde e(t)}+\norm{\bar e_i(t)}$, and \eqref{eq_error_norm}.

\subsection{Proof of Theorem~\ref{thm_iff}}\label{app_thm_iff}
{
	1) Sufficiency:
	
	Case 1: For the case  $s>0$, 	
	we consider $\norm{A}\in [1,1+\epsilon)$ with  $\epsilon=\frac{\lambda_0-s}{4(N-\lambda_0)}$ which is positive due to $\lambda_0>s$. 
	If $L$ is sufficiently large, $p_0>0$ in \eqref{notation} will be sufficiently small. Thus, given   noise bounds $b_{w}$ and $b_{v}$, considering $N\geq s+\lambda_0$,
	it is feasible to choose sufficiently large $\beta,\eta_0$ and $L>\frac{\ln \norm{A}}{\ln \gamma^{-1}}$, such that 
	\begin{align}\label{pf_beta}
	\begin{split}
	\beta&\geq (1+\epsilon)(p_0+\eta_0)+b_{w}+b_{v},\\
	\beta&\leq \min\left\{(1+\epsilon+\frac{\lambda_0-s}{4s})\eta_0, \frac{N}{s}(\eta_0-\frac{Q_0}{\epsilon_1})\right\},
	\end{split}
	\end{align}
	where  $\epsilon_1=\frac{\epsilon}{1+2\epsilon}>0$, and 
	\begin{align}\label{cap_Q}
	Q_0:=\frac{N-s}{N}(b_{w}+b_{v}+\norm{A}p_0)+b_{w}.
	\end{align}
	By the first inequality and second inequality of \eqref{pf_beta}, we have $k_0^*=\min\{1,\frac{\beta}{\norm{A}(p_0+\eta_0)+b_{w}+b_{v}}\}=1$ and $\frac{s\beta}{N\eta_0}<1$, respectively. Then 
	\begin{align}\label{eq_mo}
	m_0:&=\left(1-\frac{s\beta}{N\eta_0}\right)\left(1-\frac{k_0^*\lambda_0}{N}\right)^{-1}\nonumber\\
	&=\left(\frac{N}{s}-\frac{\beta}{\eta_0}\right)\left(\frac{N-\lambda_0}{s}\right)^{-1}\nonumber\\
	&=1+\left(\frac{\lambda_0}{s}-\frac{\beta}{\eta_0}\right)\frac{s}{N-\lambda_0}\nonumber\\
	&\overset{(a)}{\geq} 1+\left(\frac{\lambda_0}{s}-1-\frac{\lambda_0-s}{4s}-\epsilon\right) \frac{s}{N-\lambda_0}\nonumber\\
	&=1+\frac{3(\lambda_0-s)}{4(N-\lambda_0)}- \frac{\epsilon s}{N-\lambda_0}\nonumber\\
	&\overset{(b)}{\geq} 1+2\epsilon
	\end{align}
	where $(a)$ is obtained by applying the second inequality of \eqref{pf_beta}, and 
	$(b)$ is derived by using $\lambda_0-s=4\epsilon(N-\lambda_0)$ and   $s\leq N-\lambda_0$.
	From the second inequality of \eqref{pf_beta}, we obtain 
	\begin{align}\label{pf_thm1_mid}
	\vartheta_0:=1-\frac{Q_0}{\eta_0}\left(1-\frac{s\beta}{N\eta_0}\right)^{-1}\geq 1-\epsilon_1=\frac{1+\epsilon}{1+2\epsilon}.
	\end{align} 
	By \eqref{eq_mo} and \eqref{pf_thm1_mid}, we have $\vartheta_0m_0\geq 1+\epsilon\geq \norm{A}$. It is easy to check that $\vartheta_0m_0\geq \norm{A}$ is equivalent    equation \eqref{condition_thm}. Thus, the sufficiency is satisfied in this case  with the above parameters, i.e., $\epsilon=\frac{\lambda_0-s}{4(N-\lambda_0)}$, and $\beta,\eta_0$ and $L>\frac{\ln \norm{A}}{\ln \gamma^{-1}}$ satisfying \eqref{pf_beta}.

	Case 2: For the attack-free case, i.e., $s=0$, we consider $\norm{A}\in [1,1+\epsilon)$ with  $\epsilon=\frac{\lambda_0}{4N-\lambda_0}>0$.  
	Similar to case 1, it is feasible to choose 
	sufficiently large $\beta,\eta_0$ and $L>\frac{\ln \norm{A}}{\ln \gamma^{-1}}$, such that 
	\begin{align}\label{pf_beta22}
	\begin{split}
	2\beta&\geq (1+\epsilon)(p_0+\eta_0)+b_{w}+b_{v},\\
	\frac{q_0}{\eta_0}&\leq \frac{\lambda_0}{4N}(1+\epsilon).
	\end{split}
	\end{align}
	From the first inequality of \eqref{pf_beta22}, we see  $k_0^*=\min\{1,\frac{\beta}{\norm{A}(p_0+\eta_0)+b_{w}+b_{v}}\}\geq  \frac{1}{2}$.
	With $k_0^*\geq \frac{1}{2}$,
	it is easy to check that   equation \eqref{condition_thm} is satisfied if 
	$\frac{1}{\norm{A}}\frac{\eta_0-q_0}{\eta_0}\geq 1-\frac{\lambda_0}{2N}$. This inequality is satisfied due to $\norm{A}\in [1,1+\epsilon)$ and the second inequality of \eqref{pf_beta22}. Thus, the sufficiency is satisfied in this case  with   $\epsilon=\frac{\lambda_0}{4N-\lambda_0}$, and $\beta,\eta_0$ and $L>\frac{\ln \norm{A}}{\ln \gamma^{-1}}$ satisfying \eqref{pf_beta22}.
	
	}

Note that for $\eta_0$ in the two cases above, we are able to make it bigger  such that the initial error condition in Assumption~
\ref{ass_bounds} holds.

2) Necessity:
We use the contradiction method. If  $\lambda_0>s$ does not hold,  i.e., $\lambda_0\leq s$. Equation \eqref{condition_thm} is equivalent to 
\begin{align}\label{eq_a2}
\frac{\eta_0-q_0}{\eta_0}\left(1-\frac{k^*_0\lambda_0}{N}\right)^{-1}\geq \norm{A}\geq 1.
\end{align}
If $\frac{s\beta}{N\eta_0}\geq 1$, from the form of $q_0$ in \eqref{notation}, we have  $\eta_0<q_0$. Then the left-hand side of \eqref{eq_a2} is negative, which contracts with the right-hand side of \eqref{eq_a2}. Thus, $\frac{s\beta}{N\eta_0}<1.$ With the same notations as  case 1 of the sufficiency proof, \eqref{eq_a2} is equivalent to  $\vartheta_0m_0\geq \norm{A}\geq 1$. Due to $\frac{s\beta}{N\eta_0}< 1$, we have $\vartheta_0<1.$  Then 	$m_0$ has to be larger than 1, which leads to $\frac{s\beta}{N\eta_0}<\frac{k^*_0\lambda_0}{N}$. It is equivalent to $\frac{s}{\lambda_0}<\frac{k^*_0\eta_0}{\beta}$. Due to $\frac{s}{\lambda_0}\geq 1$, we have $\frac{k^*_0\eta_0}{\beta}>1$, which however can not be satisfied due to $k_0^*=\min\{1,\frac{\beta}{\norm{A}(p_0+\eta_0)+b_{w}+b_{v}}\}$. Thus, the  conjecture  $\lambda_0\leq s$  is not right, which means $\lambda_0> s$.

\subsection{Proof of Theorem~\ref{thm_normal}}\label{app_thm_normal}

First, we prove 1).	 For $i\in\mathcal{A}^c(t)$,
by 2) of Theorem~\ref{thm_stability}, we have $\sup\limits_{t\geq  t_0}\norm{e_{i}(t)}\leq \rho_{ t_0}+\sup_{t\geq t_0}p(t)$, 
thus,
\begin{align*}
&\sup\limits_{t\geq  t_0+1}|y_{i}(t)-C_{i}A\hat x_{i}(t-1)|\\
\leq& \norm{A} \sup\limits_{t\geq  t_0+1}\norm{e_{i}(t-1)}+b_w+b_v\\
\leq &\norm{A}(\rho_{ t_0}+\sup_{t\geq t_0}p(t))+b_w+b_v.
\end{align*}	
If \eqref{extra_condition} holds,  by Algorithm~\ref{alg:A}, all the observations of the attack-free sensors will eventually not be saturated, i.e., $k_i(t)=1,\forall i\in\mathcal{A}^c(t),t\geq  t_0+1.$

Next, we prove 2). By 1) of this theorem, for $i\in\mathcal{A}^c(t)$,  we have  $k_i(t)=1$, $\forall t> t_0$, then $\norm{M_t}\leq \varpi,$ where $M_t$ is defined in \eqref{eq_notations}. According to the error dynamics in \eqref{eq_error_track_simple} and inequalities \eqref{eq_attack_error}--\eqref{eq_condition}, the  upper bound of $\norm{\tilde e(t)}$ is obtained  by  applying   Lemma \ref{lem_stability}. It follows from \eqref{eq_transition} that the bound is tighter than the one in  1) of Theorem~\ref{thm_stability}.
 Due to $\norm{e_i(t)}\leq \norm{\tilde e(t)}+\norm{\bar e_i(t)}$ and \eqref{eq_error_norm}, the upper bound of $\norm{e_i(t)}$ is obtained.

Finally, we prove 3). By the real-time upper bound of the estimation error, it is straightforward to have its limit superior. Next, we prove the limit superior bound is no larger than the one in 3) of Theorem~\ref{thm_stability}, i.e., $\frac{q_0}{1-\varpi}\leq \inf\limits_{t_0\in \Gamma} \rho_{ t_0}.$ Employing the properties $\inf x+y\geq \inf x+\inf y$ and $\inf xy\geq \inf x \inf y$ for $x,y>0$  on $F(\rho_{t_0})\rho_{t_0}+q_0=\rho_{t_0+1}\leq \rho_{t_0}$ yields
\begin{align*}
\inf\limits_{t_0\in \Gamma} \rho_{t_0}&\geq \frac{q_0}{1-\inf\limits_{t_0\in \Gamma} F(\rho_{t_0})}\\
&\overset{(a)}{\geq} \frac{q_0}{1-\norm{A}\left(1-\frac{1}{N}\lambda_0\right)}\\
&\geq\frac{q_0}{1-\varpi},
\end{align*} 
where $(a)$ holds by considering the expression of $F(\cdot)$ in \eqref{notation}.

\subsection{Proof of Proposition \ref{prop_resi}}\label{pf_prop_resi}
First, we consider the case of  $\norm{A}<1$.    By applying 1) of Theorem~\ref{thm_stability}  and   choosing $t_0=1$  and  $\bar q_0=b_{w}+\max\{\beta,b_{w}+b_{v}+\norm{A}p_0\}$, we have \eqref{eq_bound_mono}. From \eqref{eq_lambda} and \eqref{notation}, we see that $F(\eta_0)$ is a monotonically non-decreasing function w.r.t.  $s$. Thus $f(s)$ is a monotonically non-decreasing function w.r.t.  $s$.

Second, we consider the case of  $\norm{A}\geq 1$. In the case, we  have \eqref{eq_bound_mono}, by applying 1) of Theorem~\ref{thm_stability}, and by choosing $t_0=1$  and $\bar q_0=q_0$. Next, we show the $f(s)$ is a monotonically non-decreasing function w.r.t. $s$.  As discussed above that $F(\eta_0)$ is a monotonically non-decreasing function w.r.t.  $s$, we  just need to prove that $q_0$ is a monotonically non-decreasing function w.r.t.  $s$. This is obviously ensured if $\beta>b_{w}+b_{v}+\norm{A}p_0$. 
Next, we prove this point by contradiction. In other words, we assume $\beta\leq b_{w}+b_{v}+\norm{A}p_0$.   Note that \eqref{condition_thm} is equivalent to
\begin{align}\label{pf_1}
1-k^*(\eta_0) \frac{\lambda_0}{N}\leq \frac{1}{\norm{A}}\left(1-\frac{q_0}{\eta_0}\right).
\end{align}
Due to $\beta\leq b_{w}+b_{v}+\norm{A}p_0$, we have $q_0\geq \beta+ b_{w}$. Then a necessary to ensure \eqref{pf_1} is 
\begin{align}\label{pf_2}
1-k^*(\eta_0) \frac{\lambda_0}{N}\leq \frac{1}{\norm{A}}\left(1-\frac{\beta+ b_{w}}{\eta_0}\right).
\end{align}
It follows from   \eqref{notation} that   $k^*(\eta_0)=\frac{\beta}{\norm{A}(p_0+\eta_0)+b_{w}+b_{v}}$.
By substituting $k^*(\eta_0)$ into \eqref{pf_2}, we obtain
\begin{align*}
\frac{\beta}{\norm{A}(p_0+\eta_0)+b_{w}+b_{v}} \frac{\lambda_0}{N}\geq \frac{\beta+ b_{w}+(\norm{A}-1)\eta_0}{\norm{A}\eta_0},
\end{align*}
which can not be satisfied due to $\lambda_0\leq N$ and $1\leq\norm{A}.$ Therefore, the assumption $\beta\leq b_{w}+b_{v}+\norm{A}p_0$ does not hold.

\subsection{Proof of Theorem~\ref{thm_bounds_detected}}\label{pf_thm_bounds_detected}
The proof is similar to the proofs of Theorems~\ref{thm_stability}--\ref{thm_normal}. In the following, we just show the main points of this proof.

Given a time $T>0$ and the maximal  number of the detected sensors at time $t$, i.e., $d(T)$, similar to the proof of Theorem~\ref{thm_stability},  for $\forall t\geq T$, we construct the 
following sequence $\{\bar \rho_{t}\in\mathbb{R}|\bar \rho_{t}\}$ in \eqref{sequence2}.
It is straightforward to prove that for $\forall t\geq T$, $\norm{\tilde e(t)}\leq\bar \rho_{t}$, where $\tilde e(t)=\frac{1}{N}\sum_{i=1}^N\hat x_{i}(t)-x(t)$. Next, we study the relationship between $\bar \rho_{t}$ in \eqref{sequence2} and $\rho_{t}$ in \eqref{sequence}. Due to $\bar\rho_T=\rho_T$, we have $\bar \rho_{T+1}=  \rho_{T+1}-\frac{d(T)\beta}{N}.$ Then, for $t=T+2,$ we have 
\begin{align*}
\bar \rho_{T+2}&= F(\bar \rho_{T+1}) \rho_{T+1}+ q_0-(F(\bar \rho_{T+1})+1)\frac{d(T)\beta}{N}\\
&\leq  \rho_{T+2}-(F_{*}+1)\frac{d(T)\beta}{N}.
\end{align*}
where $F_{*}=\inf\limits_{t_0\in \bar\Gamma}F( \bar\rho_{ t_0})\in [0,1)$, and $\bar \Gamma=\{t\geq T|\bar\rho_{t}\leq \bar\rho_{t-1}\}$. 
By recursively applying the above operation, for $t\geq T$, we obtain
\begin{align*}
\bar \rho_{t}\leq \rho_{t}-\frac{d(T)\beta}{N}\left(\frac{1-F_{*}^{t-T}}{1-F_{*}}\right).
\end{align*}
Then by Lemma \ref{lem_stability} and Theorem~\ref{thm_stability}, we have $\limsup\limits_{t\rightarrow \infty}\norm{\tilde e(t)}\leq \inf\limits_{t_0\in \Gamma} \rho_{ t_0}-\frac{d(T)\beta}{N(1-F_{*})}.$ 
Thus, the first conclusion holds  by  applying   Lemma \ref{lem_stability}, $\norm{e_i(t)}\leq \norm{\tilde e(t)}+\norm{\bar e_i(t)}$, and \eqref{eq_error_norm}. 

\subsection{Proof of Theorem~\ref{thm_observer}}\label{pf_thm_observer}
Under condition 1), owing to the connectivity of   network $\mathcal{G}$, there is a common time $\tilde t\geq \hat t_0$, such that $d_j(\tilde t)=s,\forall j\in\mathcal{V}$.	
%
%
%
%
For $t\geq \tilde t$, all the observations of the attacked sensors are discarded. Then, we have the compact form of recursive state estimates of Algorithm~\ref{alg:B} in the following 
\begin{align}\label{eq_recursive2}
\hat X(t)=&\left(I_{Nn}-\alpha(\mathcal{L}\otimes I_n)\right)^L\bigg[(I_N\otimes A)\hat X(t-1)\nonumber\\
&+ \bar C^{\sf T} \bar K_{\mathcal{A}^c}(Y(t)-\bar C(I_N\otimes A)\hat X(t-1))\bigg].
\end{align}
Let $\bar E(t)=\hat X(t)-\textbf{1}_N\otimes \hat x_{avg}(t)$, i.e., $\bar E(t)=[\bar e_1^{\sf T}(t),\dots,\bar e_N^{\sf T}(t)]^{\sf T}.$
By referring to \cite{He2020Secured}, we have
\begin{align}\label{eq_bound}
&\norm{\bar E(t+1)}\nonumber\\
\leq &\norm{(I_N\otimes A)}\norm{\left(I_{Nn}-\alpha(\mathcal{L}\otimes I_n)-P_{Nn}\right)^L\bar E(t)}\nonumber\\
&+\big\|(I_{Nn}-P_{Nn}) \left(I_{Nn}-\alpha(\mathcal{L}\otimes I_n)\right)^L\nonumber\\
&\qquad\bar C^{\sf T}\bar  K_{\mathcal{A}^c}(Y(t)-\bar C(I_N\otimes A)\hat X(t-1))\big\|\nonumber\\
\leq &2\norm{A}\gamma^L\norm{\bar E(t)}+\norm{A}\gamma^L\sqrt{N-s}\norm{\tilde e(t)}.
\end{align}
Similar to \eqref{eq_error_track_simple}, for $i\in\mathcal{A}^c$, $t\geq \tilde t$,  $k_i(t)=1$, we have $\tilde e(t+1)=M_{21}\tilde e(t)-M_{22}\bar E(t),$
	where  $M_{21}=\left(I_n-\frac{1}{N}\sum_{i\in \mathcal{A}^c}C_{i}^{\sf T}C_{i}\right)A$ and $M_{22}=\frac{1}{N}(\textbf{1}_N^{\sf T}\otimes  I_n)\bar C^{\sf T}\bar K_{\mathcal{A}^c}\bar C(I_N\otimes A).$
	Then it holds that
	\begin{align}\label{eq_bound2}
	\norm{\tilde e(t+1)}\leq \tau_0\norm{\bar E(t)}+\varpi\norm{\tilde e(t)},
	\end{align}
where $\varpi$ is given in Theorem~\ref{thm_normal}.
By \eqref{eq_bound} and \eqref{eq_bound2}, if the matrix $\left(\begin{smallmatrix}
2\norm{A}\gamma^L&\norm{A}\gamma^L\sqrt{N-s}\\
\tau_0&\varpi
\end{smallmatrix}\right)$
is Schur stable,   $\norm{\tilde e(t)}$ and $\norm{\bar E(t)}$ go to zero asymptotically. Thus, $\norm{e_i(t)}$ is convergent to zero as time goes to infinity.

%

%
%
%
%
\small
\bibliography{All_references}
\bibliographystyle{ieeetr}

\end{document}